\newcommand{\cut}[1]{}
\newcommand{\smid}{\mathbin{;}}
\newcommand{\qr}{\textrm{qr}}
\renewcommand{\phi}{\varphi}
\renewcommand{\epsilon}{\varepsilon}
\newcommand{\onot}{\mathcal O}
\newcommand{\false}{\texttt{false}}
\newcommand{\model}[4]{\llbracket#2(#3\smid#4)\rrbracket^#1}
\newcommand{\tree}{T}
\newcommand{\train}{S}
\newcommand{\bool}{\ensuremath{\{+,-\}}}
\newcommand{\mso}{\text{MSO}}
\newcommand{\labelChanges}{U}
\newcommand{\order}{\leq}
\renewcommand{\theta}{\vartheta}
\newcommand{\ie}{\mbox{i.e. }}
\newcommand{\NN}{\mathbb{N}}
\DeclareMathOperator*{\pr}{\text{Pr}}
\newcommand{\braces}  [1] {{\left\lbrace #1 \right\rbrace}}
\newcommand{\convexpath}[2]{
[   
    create hullnodes/.code={
        \global\edef\namelist{#1}
        \foreach [count=\counter] \nodename in \namelist {
            \global\edef\numberofnodes{\counter}
            \node at (\nodename) [draw=none,name=hullnode\counter] {};
        }
        \node at (hullnode\numberofnodes) [name=hullnode0,draw=none] {};
        \pgfmathtruncatemacro\lastnumber{\numberofnodes+1}
        \node at (hullnode1) [name=hullnode\lastnumber,draw=none] {};
    },
    create hullnodes
]
($(hullnode1)!#2!-90:(hullnode0)$)
\foreach [
    evaluate=\currentnode as \previousnode using \currentnode-1,
    evaluate=\currentnode as \nextnode using \currentnode+1
    ] \currentnode in {1,...,\numberofnodes} {
-- ($(hullnode\currentnode)!#2!-90:(hullnode\previousnode)$)
  let \p1 = ($(hullnode\currentnode)!#2!-90:(hullnode\previousnode) - (hullnode\currentnode)$),
    \n1 = {atan2(\y1,\x1)},
    \p2 = ($(hullnode\currentnode)!#2!90:(hullnode\nextnode) - (hullnode\currentnode)$),
    \n2 = {atan2(\y2,\x2)},
    \n{delta} = {-Mod(\n1-\n2,360)}
  in 
    {arc [start angle=\n1, delta angle=\n{delta}, radius=#2]}
}
-- cycle
}
\newtheorem{theorem}{Theorem}[section]
\newtheorem{corollary}[theorem]{Corollary}
\theoremstyle{plain}
\newtheorem{example}[theorem]{Example}
\newtheorem{claim}[theorem]{Claim}
\crefname{algocf}{Algorithm}{Algorithms}
\begin{document}

\title{Learning definable hypotheses on trees}
\author{\large Emilie Grienenberger\\\normalsize ENS Paris-Saclay \\\normalsize emilie.grienenberger@ens-cachan.fr
\and \large Martin Ritzert\\\normalsize RWTH Aachen
  University\\
  \normalsize ritzert@informatik.rwth-aachen.de}
\date{}

\maketitle

\begin{abstract}
  We study the problem of learning properties of nodes in tree structures.
  Those properties are specified by logical formulas, such as formulas from first-order or monadic second-order logic.
  We think of the tree as a database encoding a large dataset and therefore aim for learning algorithms which depend at most sublinearly on the size of the tree. 
  We present a learning algorithm for quantifier-free formulas where the running time only depends polynomially on the number of training examples, but not on the size of the background structure. 
  By a previous result on strings we know that for general first-order or monadic second-order (MSO) formulas a sublinear running time cannot be achieved.
  However, we show that by building an index on the tree in a linear time preprocessing phase, we can achieve a learning algorithm for MSO formulas with a logarithmic learning phase.
\end{abstract}

\section{Introduction}
In this paper we study the algorithmic complexity of learning properties of nodes in directed labeled trees using a declarative framework introduced by Grohe and Turán \cite{grohe2004learnability}.
Let $\tree$ be such a tree with nodes $V(\tree)$.
We call $\tree$ the \emph{background tree} of our learning problem.
The tree $\tree$ encodes the background knowledge of the learning problem and thus provides the information on which the classification of the nodes $u\in V(\tree)$ can be based.
In our setting a (boolean) \emph{classifier} is a function $H\colon V(\tree)\rightarrow \bool$ that  estimates whether a given node admits a certain property.
A learning algorithm gets a \emph{training set} $\train\subseteq V(\tree)\times \bool$, that is a set of pairs $(u,c)$ of positive and negative \emph{examples}, and the background tree $\tree$ as input and returns a classifier $H_S\colon V(\tree)\rightarrow \bool$ as its \emph{hypothesis}. 
We say that learning was successful if the hypothesis $H_S$ is \emph{consistent} with $\train$ which means that for every $(u,c)\in \train$ we have that $H_S(u)=c$.
Achieving consistency with $\train$ can be seen as the extreme case of minimizing the \emph{training error}, \ie the number of $(u,c)\in \train$ such that $H_S(u)\neq c$.
Minimizing the training error, also called \emph{empirical risk minimization}, results in provably good generalization behavior in the PAC learning model (see \cite{bluehrhau+89}).
For those generalization results, we use that logical formulas on trees admit bounded VC-dimension (shown by Grohe and Turán \cite{grohe2004learnability}) and that any consistent learner can be turned into a PAC learner by an appropriate training set $S$ (see \cite{bluehrhau+89}).
We give more details on the connection to PAC learning in Section \ref{subsec:pacLearning}.

An example of a simple property a node can admit is having an ancestor with label $b$.
This property can be expressed by the logical formula $\phi(x) = \exists y\, (y<x) \land R_b(y)$.
We aim to learn properties which can be defined by logical formulas with parameters based on positive and negative examples over tree-structured data such as web pages, XML databases and JSON files.

\begin{example}\label{example:adDetection}
Given a large website such as a news portal.
The document object model (DOM) of a website is the tree of elements which form the website.
Many websites contain a number of ads and some of them are not trivially detectable.
A learning algorithm could then estimate the property of a position in the DOM to be part of some ad.

The output of a learning algorithm would then be a formula that distinguishes nodes belonging to the content of the web page from those belonging to ads.
Those formulas could then be used as a basis for new simpler or better filter rules.
\end{example}

In Example \ref{example:adDetection}, the user could select parts of a web page, which he sees as advertisement and then let the learning algorithm produce a classifier which is consistent with his choice.
In this paper we will not go into detail of how we get our training set but instead only talk about finding consistent hypotheses for a given training set.

We consider learning algorithms that return classifiers based on logical formulas, especially quantifier-free formulas and formulas from monadic second-order logic.
In our logical framework, a classifier consists of a formula $\phi(x\smid \bar y)$ and an instantiation $\bar v$ of the free variables $\bar y$.
This formula $\phi$ has two types of free variables; we refer to $x$ as the \emph{instance variable} and to $\bar y = (y_1,\dots, y_\ell)$ as \emph{parameter variables}, where $\ell\in\mathbb{N}$. 
The background structure $\tree$ is the (fixed) background knowledge that encodes the context of a node which is to be classified.
The parameters of $\phi$, which can be seen as constants the formula is allowed to use for the classification, are taken from $V(\tree)$.
In Example \ref{example:adDetection}, a classifier would consist of a logical formula and a number of positions in the DOM of the web page.
The formula $\phi(x\smid \bar y)$ together with an $\ell$-tuple $\bar v \in V(\tree)^\ell$ of parameters then defines a binary classifier $\model{\tree}{\phi}{x}{\bar v}\colon V(\tree) \rightarrow \bool$ over the tree $\tree$ as follows.
An instance $u\in V(\tree)$ is classified as positive if $\tree \vDash \phi(u,\bar v)$ such that we have $\model{\tree}{\phi}{x}{\bar v}(u) = +$.
Correspondingly we have $\model{\tree}{\phi}{x}{\bar v}(u') = -$ for any $u'\in V(\tree)$ with $\tree \nvDash \phi(u',\bar v)$.
We call such a classifier where $\phi$ is an MSO formula an \emph{MSO definable hypothesis}.

We assume the background tree $\tree$ to be very large, which means large enough that just reading it sequentially takes long, while the logical formula returned by the learning algorithm is assumed to be small for every real-world query.
This implies two things.
First, we use a \emph{data complexity} view for the analysis considering the tree $\tree$ and the training set $\train$ as data and the hypothesis class parameterized by $\ell$ (and an additional parameter $q$ introduced later) as a constant, such that the complexity results are only given in terms of $\tree$ and $\train$.
Essentially this means that the influence of the formula is considered to be constant. Second, we are interested in finding algorithms which run in sublinear time in the size of~$\tree$.
As such sublinear algorithms are unable to read the whole background tree $\tree$, we model the exploration of $\tree$ using oracles.
Those oracles allow the learning algorithm to explore the tree by following edges, starting from the training examples in $\train$.
This is formally defined in Section \ref{sec:preliminaries}.

In general there are no consistent sublinear learning algorithms for first-order and monadic second-order formulas over trees.
In \cite{DBLP:conf/alt/GroheLR17} the authors have shown that for learning first-order formulas over words, linear time is necessary.
The same counterexample can also be used for trees, showing that linear time is again necessary.
For structures of bounded degree, there exists a sublinear learning algorithm for first-order formulas (see \cite{grorit17}).
This result is not applicable in our setting, as the ancestor relation $\leq$ in the signature of our trees induces unbounded degree (the degree of the root is $|V(\tree)|-1$ as it is an ancestor of every other node).

\subsection{Our Results}
In our formal setting, we consider learnability on trees for monadic second-order logic and the quantifier-free fragment of first-order logic.
We show that in contrast to the corresponding case on strings (see \cite{DBLP:conf/alt/GroheLR17}), even the relatively simple task of learning quantifier-free formulas on trees needs at least linear time.
This is due to the need to synthesize appropriate parameters, a task which can involve searching for the largest common ancestor of two nodes.
We show that this is really the core of the problem by giving a sublinear learning algorithm for quantifier-free formulas in Section \ref{sec:quantifier_free} where we exploit an additional oracle providing access to the largest common ancestor of two nodes.

As we know that there is no sublinear learning algorithm for MSO formulas on trees, we investigate whether the necessary linear computation depends on the training examples.
This hardness still holds if the learning algorithm is allowed to return more parameters, as long as the complete training set can not be encoded in those parameters.
It turns out that it is possible to build an auxiliary structure in linear time which can then be used for sublinear learning.
We present an algorithm which builds such an index structure without knowledge of the training set in linear time and then uses logarithmic time to output a consistent hypothesis.
The algorithm builds on the results on strings (see \cite{DBLP:conf/alt/GroheLR17}) as well as known techniques for evaluating MSO formulas under updates (see \cite{BalminPV04}), combining them in a non-trivial way to a learning algorithm for MSO formulas.
The linear indexing phase in the learning algorithm cannot be avoided since already for first-order formulas on words it is necessary to invest at least linear time in $\onot(|\tree|)$ to find a consistent hypothesis.
The following theorem is the main result of this paper. 
 
{
\begin{theorem}\label{thm:main}
There is a consistent MSO learning algorithm on trees which uses linear indexing time $\onot(|\tree|)$ and logarithmic learning time $\onot(|\train| \log |\tree|)$.
\end{theorem}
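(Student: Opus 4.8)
The plan is to route the problem through the automata-theoretic view of MSO on trees and reduce the learning task to a parameter search that a linear-time index can guide. First I would recall that every MSO formula $\phi(x\smid \bar y)$ of quantifier rank $q$ is equivalent to a deterministic bottom-up tree automaton over the alphabet $\Sigma\times 2^{\{x,y_1,\dots,y_\ell\}}$, where a node's extra label records which of the (first-order) variables $x,y_1,\dots,y_\ell$ sit on it; then $\tree\vDash\phi(u,\bar v)$ holds exactly when this automaton accepts $\tree$ with $x\mapsto u$ and $\bar y\mapsto\bar v$. Since $q$ and $\ell$ are constants in the data-complexity view, there are only finitely many MSO $q$-types of $(\ell+1)$-tuples, and the type of $(u,\bar v)$ decides membership for every formula in the class. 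This turns "output a consistent MSO hypothesis" into the combinatorial task of finding parameters $\bar v\in V(\tree)^\ell$ such that the type map $u\mapsto\text{type}(u,\bar v)$ assigns disjoint type sets to the positive and the negative examples of $\train$; the hypothesis formula is then the disjunction of the types realized by the positive examples.

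Next I would build the index. In the linear preprocessing phase I construct a balanced hierarchical decomposition of $\tree$ in the style of \cite{BalminPV04}: the tree is contracted into an $\onot(\log|\tree|)$-depth hierarchy of clusters, each annotated with the composed transition function it induces on the finitely many automaton states. This is the standard machinery for maintaining MSO evaluation under relabelings, so marking or unmarking a single node touches only the $\onot(\log|\tree|)$ cluster summaries on the path to the root. Hence, after placing the instance variable and the $\ell$ parameters, we can recompute the root state -- and, by running all relevant automata in parallel, the full $q$-type of the marked tuple -- in $\onot(\log|\tree|)$ time. Crucially, this index is built with no knowledge of $\train$, as the theorem demands.

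The heart of the argument is the parameter search, and this is where the string result \cite{DBLP:conf/alt/GroheLR17} enters. By a Feferman--Vaught-style composition theorem for MSO on trees, the type of $(u,\bar v)$ is determined by two pieces of data: (i) the \emph{shape} of the skeleton connecting $u$, the parameters $v_1,\dots,v_\ell$, and the root -- that is, the ancestor/descendant and least-common-ancestor relations among these nodes -- and (ii) the types of the finitely many tree-contexts hanging off this skeleton. For fixed $\ell$ there are only constantly many shapes, so I would branch over all of them. Once a shape is fixed, each maximal skeleton path behaves exactly like the string setting: placing parameters along it cuts it into context-pieces whose types compose, so finding admissible parameter positions on that path is an instance of the string parameter-search problem, which by \cite{DBLP:conf/alt/GroheLR17} is solvable in logarithmic time provided the needed context types are available -- and these are precisely what the index returns in $\onot(\log|\tree|)$ per query. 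Collecting the type constraints imposed by all training examples and solving the resulting constant-size system yields $\bar v$; verifying and reading off the consistent hypothesis is a final batch of $\onot(\log|\tree|)$ type computations per example, for a total learning time of $\onot(|\train|\log|\tree|)$.

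The main obstacle is exactly the tree branching in the third step. On strings the parameters merely partition a line into intervals, whereas on trees one must simultaneously guess the skeleton shape, respect the branching constraints that link the several skeleton paths at their least common ancestors, and reconcile the per-example constraints into one global parameter choice. Making these cohere -- so that the reduction to the string subproblem is faithful and each subproblem is still solvable in logarithmic time against the precomputed index -- is the non-trivial combination of the two cited results on which the theorem rests.
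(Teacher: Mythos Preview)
Your proposal assembles the right ingredients—the heavy-path/cluster index of \cite{BalminPV04} with $\onot(\log|\tree|)$ relabeling, and the string parameter search of \cite{DBLP:conf/alt/GroheLR17}—and the two-phase architecture matches the paper. The paper, however, does not branch over skeleton shapes or use a per-example Feferman--Vaught decomposition. It encodes the entire training set once as node labels $P/N$ and works with the single sentence $\psi(P,N,\bar y)=\forall x\,(P(x)\to\phi(x,\bar y))\land(N(x)\to\neg\phi(x,\bar y))$. The key construction you are missing is a \emph{powerset monoid} $\hat{\mathcal M}=2^{\mathcal M}$ over the transition monoid $\mathcal M$ of the DFA that simulates the tree automaton for $\psi$ along heavy paths: the element $\hat m$ stored at a node is the set of inner-monoid values obtainable by placing \emph{some} subset of the parameters anywhere in that node's dependent subtree. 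Simon factorization trees over $\hat{\mathcal M}$ are built in the indexing phase; in the learning phase the $|\train|$ label changes are pushed through them in $\onot(|\train|\log|\tree|)$, and a single top-down trace then picks witnesses $m\in\hat m$ level by level to extract $\bar v$.

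Your skeleton route, as written, has a gap exactly where this powerset step sits. The skeleton connecting $u$ and $\bar v$ is defined by $\bar v$, which is what you are searching for, so ``fix a shape, then run the string algorithm on each skeleton path'' is circular unless you also guess which heavy paths carry parameters—and that choice is not of constant size. Parameters may also sit in subtrees disjoint from every example's root path, so they need not lie on any path you would derive from $\train$; the powerset monoid is precisely what accounts for all such placements simultaneously. Finally, ``collecting the type constraints imposed by all training examples and solving the resulting constant-size system'' is not constant-size: there are $|\train|$ constraints on the same unknown $\bar v$, and you have not said how to aggregate them into one search. The paper's trick of pushing $P,N$ into the alphabet turns this aggregation into $|\train|$ label updates followed by \emph{one} monoid computation, which is what makes the $\onot(|\train|\log|\tree|)$ bound go through.
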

}

As an application of our indexing algorithm we describe an online learning algorithm that computes an index in $\onot(|\tree|)$ and then, for a sequence of examples, updates its MSO-definable hypothesis in time $\onot{}(\log^2(\tree))$ per example.
That is, in this setting the examples arrive one-by-one and we are able to maintain a consistent hypothesis in polylogarithmic time in the background structure.

\subsection{Related work}
The field of inductive logic programming  (see for example \cite{cohpag95,kiedze94,mug91,mug92,mugder94}) is very close to our framework.
In both cases the aim is to infer logical formulas from positive and negative examples such that the logical formula is consistent with the training examples.
The main difference to our setting is that in the ILP framework the background knowledge is also encoded in logic (a so called background theory), while we use a structure to encode background knowledge.
In our setting, facts such as gender and age of a person or the issuing institute of a credit card are represented using nodes for person and credit card, as well as unary relations to describe their attributes.
Naturally facts which involve multiple entities can be represented by edges.
Our framework is able to represent such facts as long as the union of all binary relations still describes a tree or forest while in the ILP setting there is no such restriction. The other important difference is that ILP focuses on first-order logic (and there especially Horn-formulas), while in this paper we work with monadic second-order logic (MSO) which is strictly more expressive than first-order logic.
There is a number of other logical frameworks for machine learning, mainly originating from the field of formal verification and databases. Examples are given by \cite{aboangpap+13,bonciusta16,lodmadnei16,garneimadrot16,jorkai16,weiss2017reverse}.

Another related field, which is based on the query by example strategy, is to learn XPATH queries as in \cite{staworko2012learning}.
There, unary relations defined by an XPATH expression are learned for arbitrary training sets. 
The main difference to our setting is that we use MSO formulas, which are in general more expressive than XPATH statements and then restrict the maximal complexity of our formulas.

The field of automata learning and learning of regular languages is also to some degree similar to our setting,  especially since we are also applying automata based techniques.
There are numerous negative results such as \cite{Angluin78,Gold78,PittW93,KearnsValiant94,Angluin90}.
Of the positive results in that area \cite{Angluin87,RivestS93,GarciaO92,drewes2003learning}, most of them use an active framework where a teacher iteratively gives counterexamples until the hypothesis is correct.
In our framework a consistent hypothesis for a training set is sought and that training set is known from the beginning.
Even though our classification problem can be encoded as a learning problem for regular tree languages, their results seem technically unrelated to ours.

\section{Preliminaries}\label{sec:preliminaries}
In this paper we work with logical formulas from the quantifier-free fragment of first-order logic and formulas from monadic second-order logic.
Quantifier-free formulas only consist of boolean combinations of atomic properties, while monadic second-order logic (MSO) extends first-order logic (FO) by quantification over sets of nodes.
As an example, take the MSO formula $\exists X \forall z\, Xz$ which is always satisfied as there is always a set $X$ containing all elements of the structure. 
It is known that a set of trees can be recognized by a deterministic bottom-up tree automaton (DTA) $\mathcal A$ if and only if it can be characterized by an MSO sentence $\Phi$ and both $\mathcal A$ and $\Phi$ can be computed from each other.
For a more detailed description of MSO and tree automata we refer to \cite{tho97a}.

In this paper we consider labeled trees as background structures.
The most prominent examples for trees in a database context are the tree-structured data exchange formats XML and JSON.
Formally, a labeled tree $\tree = (V(\tree), E_1,E_2,R_1,\dots,R_r,\leq)$ is a structure with vertex set $V(\tree)$ and binary edge relations $E_1$ and $E_2$ encoding the first and second child of a node in a binary tree, or in case of unranked trees, the first child and the next sibling of each node.
The unary relations $R_1,\dots R_r$ define the label of each node and for every $a,b\in V(\tree)$ we have $a\leq b$ if $a$ is an ancestor of $b$.
In this paper we use formulas over the alphabet $\sigma = \{ E_1,E_2,\leq,R_1,\dots, R_r\}$ which means that in a formula we can access the tree structure using $E_1$ and $E_2$ as well as the labels of each node using $R_1,\dots R_r$.
A fragment of an XML document (focusing on persons) could look like the following.
\begin{verbatim}
<person name="A. Turing" birthday="1912-06-23">
  <interest>computer science</interest>
  <interest>marathon running</interest>
  ...
</person>
\end{verbatim}
A formula has access to all tags occurring in the XML document.
In the above XML fragment, the labels are given by the unary relations \(\texttt{`person',`name',`birthday',`interest'}\).
Adding additional content-based labels such as $\texttt{`computer scientist'}$ or $\texttt{`runner'}$ to the set of unary relations allows to write formulas that depend on the content of nodes.
Without such content-based labels 
a formula could only use structural properties and define sets such as all persons with at least three interests and two friends.

\subsection{Learning Model}\label{subsec:learning}
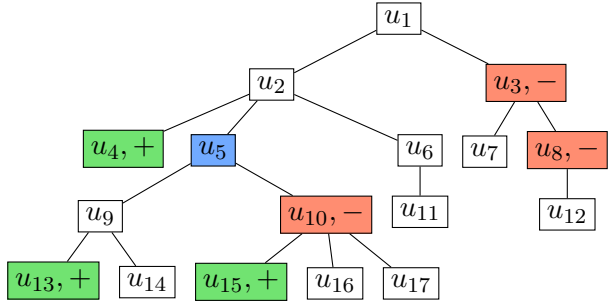
\begin{figure}[t]
  \begin{center}
    \vspace{-1ex}
    \colorlet{parameterblue}{rgb:cyan,2;white,4;blue,3}
\colorlet{examplegreen}{rgb:green,4;white,4;black,1}
\colorlet{examplered}{rgb:orange,2;red,3;white,4}
\begin{forest}
  for tree={l sep=1em, l=1em, math content, inner sep=0.25em,draw,},
  [u_1
    [u_2
      [{u_4,+},fill=examplegreen]
      [u_5,fill=parameterblue
        [u_9
          [{u_{13},+},fill=examplegreen]
          [u_{14}]
        ]
        [{u_{10},-},fill=examplered
          [{u_{15},+},fill=examplegreen]
          [u_{16}]
          [u_{17}]
        ]
      ]
      [u_6[u_{11}]]
    ]
    [{u_3,-},fill=examplered
      [u_7]
      [{u_8,-},fill=examplered[u_{12}]]
    ]
  ]
\end{forest}
     \vspace{-1ex}
  \end{center}
  \caption{A tree with positive (green) and negative (red) example nodes}\label{fig:definableSets}
\end{figure}

We consider the model of \emph{supervised learning} where the input to a \emph{learning algorithm} is a \emph{training set} (or training sequence) $\train\subseteq V(\tree)\times\bool$. Each \emph{example} $(u,c)\in\train$ consists of a node $u$ and its classification $c$.
We write $u\in\train$ when we are not interested in $u$'s classification but only the position of the example.
We assume that $\train$ is non-contradicting, that is if $(u,c)\in\train$, then $(u,\neg c)\notin\train$.
For the tree given in Figure \ref{fig:definableSets} we define the training set: 
\[
  \train=\{(u_3,-),(u_4,+),(u_8,-),(u_{10},-),(u_{13},+),(u_{15},+)\}
\]
As already defined in the introduction, a definable hypothesis $\model{\tree}{\phi}{x}{\bar v}$ assigns $+$ to every position $u\in V(\tree)$ with $\tree \vDash \phi(u,\bar v)$ and $-$ otherwise.
Let $\phi(x\smid y) = \exists z (E(x,z) \land E(z,y)) \land x\neq y $ accepting all positions with a distance of $2$ from the position of $y$ ($\tree$ contains no self-loops). In the example from Figure \ref{fig:definableSets} we have
$\model{\tree}{\phi}{x}{u_5}(u) = +$ if and only if $u \in {\{u_1,u_4,u_6,u_{13},u_{14},u_{15},u_{16},u_{17}\}}$.
This hypothesis is \textit{consistent} with $\train$ as it accepts all positive and none of the negative examples from $\train$.
The formula $\psi(x) = \exists\, z E_1(z,x)$ defines another consistent hypothesis with $\llbracket \psi(x) \rrbracket ^\tree (u) = + \text{ if and only if } u \in \{ u_2,u_4,u_7,u_9,u_{11},u_{12},u_{13},u_{15}\}$.

The quantifier rank $\qr(\phi)$ of a formula $\phi$ is the maximal nesting depth of quantifiers in~$\phi$. As every set $S^+\subseteq V(\tree)$ is definable by a (long enough) MSO formula $\phi$, we restrict our study to sets which are definable by formulas $\phi(x\smid y_1,\dots y_\ell)$ with $\qr(\phi)\leq q$ where $q$ and $\ell$ are considered to be part of the problem.
Restricting $q$ and $\ell$ reduces the risk of overfitting as such restricted formulas can only memorize a bounded number of positions and thus, on larger training sets, have to exploit more general patterns in the data.

Our framework naturally admits two different learning problems.
In \emph{model learning} we assume that there is a consistent classifier $\model{\tree}{\phi}{x}{\bar v}$ with $\bar v \in V(\tree)^\ell$ and $\qr(\phi)\leq q$, but only $q$ and $\ell$ are given to the learning algorithm.
This reflects the assumption that there is a simple, as expressed by the choice of $q$ and $\ell$, but unknown pattern behind the classification of the training examples from $\train$.
In \emph{parameter learning} the formula $\phi$ of a consistent classifier $\model{\tree}{\phi}{x}{\bar v}$ is fixed. 
The learning algorithm is not allowed to modify $\phi$ and has to find a consistent parameter setting $\bar v\in V(\tree)^\ell$.
This variant reflects the case where we have a general idea about how the solution looks like, but are missing the details.
Counterintuitively, parameter learning is the harder problem: the restriction to a specific formula $\phi$ might impose (unnecessary) restrictions on the parameters.
An edifying example is the parameter learning problem on a background structure $\tree$ with a singleton unary relation $R$, using the formula $\vartheta(x\smid y) = R(y)$ and a training set $\train=\{(u,+)\}$ for an arbitrary $u\in V(\tree)$.
In this example, for any fixed traversal strategy of the learning algorithm on $V(\tree)$, the single possible parameter can be placed in the position which is evaluated last.
For the associated model learning problem, a possible solution would be to return the formula $\psi(x) = \texttt{true}$ without parameters, which defines a hypothesis consistent with $\train$.
In the example given in Figure \ref{fig:definableSets}, a learning algorithm for the model learning problem would be free to choose between any consistent hypothesis, such as the example formulas $\phi(x\smid u_5)$ and $\psi(x)$ given above.
In the parameter learning problem with the formula $\phi(x\smid y)$, the (only) consistent output is the assignment $y=u_5$.

In practice, whenever we want to evaluate a definable hypothesis $\model{\tree}{\phi}{x}{\bar v}$, we have to solve an instance of the model checking problem for the formula $\phi$ over the structure $\tree$.
For the case of a fixed MSO formula $\phi(x\smid \bar y)$ on a tree $\tree$, there is an evaluation strategy in $\onot(|\tree|)$ using tree automata, see for example \cite{tho97a}, while a quantifier-free formula $\theta$ can be evaluated in time $\onot(|\theta|)$.

\subsection{Access Model}\label{subsec:access_model}

A sublinear learning algorithm is unable to read the whole background structure during its computation. We therefore model the access to the background structure by oracles.

\begin{description}
	\item[Relation Oracles] For a $k$-ary relation $R$, the corresponding oracle returns on input of \\\(\bar u \in V(\tree)^k\) whether $\bar u \in R$ holds in $\tree$. 
	\item[Neighborhood Oracle] On input of a node $u\in V(\tree)$, returns the $1$-neighborhood of $u$ in $\tree$.
\end{description}

In a binary tree, the $1$-neighborhood $N$ of a node $u$ consists of $u$, its parent and its child nodes.
In the unranked case $N(u)$ consists of $u$ as well as its left and right sibling, first child and parent.
This access model is called \emph{local access} as the tree can only be explored by following edges. Directly jumping to the closest node that is in a relation $R$ or to the last child of an unranked node is not possible in this access model. 
In practice, those oracles can be implemented using random access on the background tree $\tree$. \begin{figure}[t]
  \begin{center}
    \vspace{-1ex}
    \begin{adjustbox}{valign=b}
\begin{forest}
   for tree={draw,circle,l=0.7em,l sep=0.7em,},
   [,phantom
    [,phantom]
    [,phantom
      [,
        [,no edge,color=white][,phantom]
      ]
      [,phantom]
    ]
   ]
\end{forest}
\end{adjustbox}\qquad~~
\begin{adjustbox}{valign=b}
\begin{forest}
   for tree={draw,circle,l=0.7em,l sep=0.7em,},
   [,phantom
    [,phantom]
    [
      [,fill=blue!50
        [][]
      ]
      [,phantom]
    ]
   ]
\end{forest}
\end{adjustbox}\qquad~~
\begin{adjustbox}{valign=b}
\begin{forest}
   for tree={draw,circle,l=0.7em,l sep=0.7em,},
   [
    [,phantom]
    [,fill=blue!50
      [,fill=blue!20
        [][]
      ]
      []
    ]
   ]
\end{forest}
\end{adjustbox}\qquad~~
\begin{adjustbox}{valign=b}
\begin{forest}
   for tree={draw,circle,l=0.7em,l sep=0.7em,},
   [,fill=blue!50
    []
    [,fill=blue!20
      [,fill=blue!20
        [][]
      ]
      []
    ]
   ]
\end{forest}
\end{adjustbox}
\vspace{-1em}   \end{center}
  \caption{From left to right: Exploration of a tree using neighborhood queries starting from an example node. The dark blue node is the one on which the neighborhood query has been performed last}\label{fig:localAccess}
\end{figure}
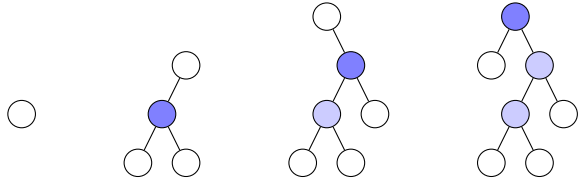
A learning algorithm using local access starts with all nodes occurring in the training set $\train$ and then explores the background tree $\tree$ using the neighborhood and relation oracles.
Figure \ref{fig:localAccess} illustrates how the learning algorithm can explore the background tree using subsequent neighborhood queries on the topmost node. 
The neighborhood queries return the vertices, while the relation queries clarify directions and labels.

\subsection{PAC learning}\label{subsec:pacLearning}
It is known that under certain simplicity restrictions, a consistent learner generalizes well to new and potentially unseen examples.
For a \emph{probably approximately correct (PAC)} learning algorithm we have that for every $\epsilon$ and $\delta$ there is a size $s\in \mathbb N$ of the training set $\train$ such that the error of the hypothesis under new examples is bounded by $\epsilon$ with a confidence level of $1-\delta$.
This is made formal in Equation \eqref{eqn:PacCriterion}.
Let $c^*\colon V(\tree) \rightarrow \bool$ be the function that assigns the correct classification to every node $u\in V(\tree)$.
Let the training set $\train \in 2^{V(\tree)\times\bool}$ be a set of $t$ examples $(u, c^*(u))$ chosen independently and identically distributed (i.i.d.) according to a fixed distribution $D$.
Let $(u, c^*(u))\sim D$ and $\train\sim D$ denote the random choices according to $D$.
Let $H_\train\colon V(\tree) \rightarrow \bool$ be the hypothesis returned by the learning algorithm on input of the training set $\train$.
Then the PAC criterion is given by
\begin{align}
    \pr_{\train\sim D} \left(\pr_{(u,c)\sim D} \left(H_\train(u) \neq c^*(u)\right) \leq \epsilon\right) \geq 1-\delta \label{eqn:PacCriterion}
\end{align}
where the outer probability (the \emph{confidence}) $\pr_{\train\sim D}$ is taken over the training set $\train$ for which the learning algorithm produces a hypothesis $H_\train$.
The inner probability is the expected error of the hypothesis $H_\train$ for an example chosen according to the distribution~$D$.
For more details on the PAC learning model, we refer to \cite{valiant1984theory}. 

There is a very general result from learning theory that shows that for any hypothesis class with bounded Vapnik-Chervonenkis (VC) dimension a consistent learner can be turned into a PAC learning algorithm by providing a large enough training set (see \cite{vapnik2015uniform} or \cite{bluehrhau+89}).
For the case of MSO definable hypotheses on trees, the VC dimension is bounded (see \cite{grohe2004learnability}).
Hence, there is a sufficient size $s$ of $\train$, depending polynomially on $\frac{1}{\epsilon}$ and $\frac{1}{\delta}$, to satisfy the PAC criterion.
Using the algorithms from our Theorems \ref{th:sub_qf_binary} and \ref{thm:msoLearning}, each providing a consistent learning algorithm for learning formulas on trees, we have the following corollary.
\begin{corollary}
There are (efficient) PAC learning algorithms for of learning quantifier-free formulas and monadic second-order formulas over trees.
\end{corollary}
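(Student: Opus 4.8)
The plan is to assemble this corollary from three ingredients, all available from the preceding discussion: the bounded VC-dimension of the relevant hypothesis classes, the consistent learning algorithms of Theorems~\ref{th:sub_qf_binary} and~\ref{thm:msoLearning}, and the classical reduction from consistent learning to PAC learning for classes of bounded VC-dimension.

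First I would fix the two hypothesis classes precisely. For the quantifier-free case the class consists of all hypotheses $\model{\tree}{\phi}{x}{\bar v}$ where $\phi(x\smid\bar y)$ is quantifier-free with at most $\ell$ parameter variables and $\bar v\in V(\tree)^\ell$; for the MSO case $\phi$ additionally ranges over formulas with $\qr(\phi)\le q$. Since $q$ and $\ell$ are treated as constants of the problem, each class is determined by the choice of formula (from a finite pool, up to logical equivalence) together with a parameter tuple, and by the result of Grohe and Tur\'an~\cite{grohe2004learnability} each has VC-dimension bounded by a constant $d$ depending only on $q$, $\ell$, and the signature $\sigma$.

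Second, I would invoke the fundamental theorem of statistical learning (see~\cite{bluehrhau+89}): for a class of VC-dimension $d$, any hypothesis consistent with a training set $\train$ of $s = \onot\!\left(\tfrac{1}{\epsilon}\left(d\log\tfrac{1}{\epsilon}+\log\tfrac{1}{\delta}\right)\right)$ examples drawn i.i.d.\ satisfies the PAC criterion~\eqref{eqn:PacCriterion}. Because $d$ is constant, this sample size $s$ is polynomial in $\tfrac1\epsilon$ and $\tfrac1\delta$, as already noted in the surrounding text. Feeding such a training set into the consistent learners of Theorems~\ref{th:sub_qf_binary} and~\ref{thm:msoLearning} then produces hypotheses that are automatically PAC.

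The only point requiring care — and the step I expect to be the main obstacle — is \emph{efficiency}, since a consistent learner does not by itself yield an \emph{efficient} PAC learner unless its running time is polynomial in the sample size. Here I would read off the efficiency directly from the stated running times: the MSO learner runs in $\onot(|\train|\log|\tree|)$ and the quantifier-free learner likewise runs polynomially in $|\train|$, so substituting $|\train| = s$ keeps the whole learning phase polynomial in $\tfrac1\epsilon$ and $\tfrac1\delta$ (with $|\tree|$, $q$, $\ell$ fixed, and the linear indexing phase of the MSO algorithm amortized as one-time preprocessing). This closes the efficiency gap and establishes the claimed efficient PAC algorithms for both hypothesis classes.
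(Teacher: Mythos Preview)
Your proposal is correct and follows essentially the same argument as the paper: bounded VC-dimension from Grohe--Tur\'an, the standard reduction from consistent learning to PAC learning via \cite{bluehrhau+89}, and efficiency read off from the polynomial dependence on $|\train|$ in Theorems~\ref{th:sub_qf_binary} and~\ref{thm:msoLearning}. The paper presents the corollary as an immediate consequence of exactly these ingredients, so your write-up is simply a more explicit version of what the paper sketches in the surrounding text.
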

The running time of those PAC learning algorithms follows directly from the corresponding theorems.
A PAC learning algorithm is called \emph{efficient} if the dependence of the running time on the number of training examples is polynomial which is the case in Theorem \ref{th:sub_qf_binary} and \ref{thm:msoLearning}.

\newcommand{\suff}{N_\train^\text{suff}} \newcommand{\examplesAbove}{S_\text{above}(v)}
\newcommand{\examplesBelow}{S_\text{below}(v)}
\newcommand{\examplesFamily}{S_\text{family}(v)}
\newcommand{\examplesLeft}{S_\text{left}(v)}
\newcommand{\examplesRight}{S_\text{right}(v)}

\section{Quantifier-free formulas}
\label{sec:quantifier_free}
The class of quantifier-free formulas $\Phi_\sigma[n]$ with $n$ free variables over the signature $\sigma$ is defined as the fragment of first-order logic without quantifiers.
Every formula $\phi(\bar x) \in \Phi_\sigma[n]$ can only compare its free variables directly using a boolean combination of atomic properties from $\sigma$.
We exploit this limitation to obtain a learning algorithm for quantifier-free formulas which runs in constant time with respect to $\tree$ under a slightly relaxed notion of local access.
We start by showing that there is no sublinear learning algorithm using the notion of local access as defined in Section \ref{sec:preliminaries}. 

\begin{lemma}\label{thm:no_subl_without_commonanc}
    For every $\ell\in\mathbb N$, there is no consistent learning algorithm that uses local access and, given a binary tree $\tree$ and a training set $\train$, returns a consistent hypothesis $\model{\tree}{\phi}{x}{\bar v}$ with $\phi\in \Phi_\sigma[\ell+1]$ in time $o(|\tree|)$.
\end{lemma}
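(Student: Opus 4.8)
The plan is to exhibit, for each fixed $\ell\geq 1$, a single family of binary trees on which a consistent hypothesis provably exists but cannot be recovered with $o(\abs{\tree})$ local-access queries. Put $k=\ell+1$ and build a tree $\tree$ (depending on a size parameter $m$, with $\abs{\tree}=\Theta(m)$) as follows: a root $\rho$ has a left subtree rooted at a node $w$ and a right subtree rooted at $w'$; below $w$ I attach, through a small binary gadget of $O(k)$ nodes, $k$ pairwise disjoint downward paths of length $m$, and declare the bottom leaf $a_1,\dots,a_k$ of each path a positive example; the right subtree is symmetric with negative leaves $b_1,\dots,b_k$. All nodes receive the same label. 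The training set is $\train=\{(a_i,+)\}\cup\{(b_j,-)\}$, which contains $\ell+1$ positive and $\ell+1$ negative examples, deliberately too many of each to be listed by $\ell$ parameters. A consistent hypothesis does exist: with $\phi(x\smid y)=(y\leq x)$ and parameter $w$, the classifier $\model{\tree}{\phi}{x}{w}$ accepts exactly the descendants of $w$, hence every $a_i$ and no $b_j$.

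The first step is to bound what a learner can see. Since $\abs{\tree}=\Theta(km)=\Theta(m)$ for fixed $k$, and every example sits at the bottom of a path of length $m$, reaching any spine node --- in particular $w$, $w'$ or $\rho$ --- from an example requires $m$ consecutive neighborhood queries along one path, with no shortcut available under local access. For large $m$ the entire budget $o(\abs{\tree})$ is below $m$, so the explored set $X$ stays within the lower interiors of the paths and never contains a spine node. Because local access discovers new vertices only by following edges from the examples, and relation queries reveal no new vertices, any parameters $\bar v$ the algorithm can name must lie in $X$ (together with the given examples).

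The main step is then to show that no quantifier-free $\phi\in\Phi_\sigma[\ell+1]$ together with parameters $\bar v\subseteq X$ can be consistent. The key fact is that $\model{\tree}{\phi}{x}{\bar v}(u)$ depends only on the atomic type of the tuple $(u,\bar v)$. A parameter in the interior of a single path is equal to, $\leq$-comparable with, or an $E_1$- or $E_2$-neighbour of at most one example --- the leaf at the bottom of its own path --- precisely because the separating ancestors ($w$, $w'$, the spine) lie beyond reach and the labels are uniform. Hence each of the $\ell$ parameters \emph{touches} at most one example, so at least one positive and at least one negative example remain untouched, and all untouched examples share one and the same trivial atomic type relative to $\bar v$. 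Consequently $\phi$ is forced to assign such a positive and such a negative example the same value, contradicting consistency. Since a consistent hypothesis exists yet none is expressible from $X$, every $o(\abs{\tree})$ learner errs on this instance for all large $m$, proving the lemma.

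I expect the delicate point to be the ``touches at most one example'' analysis: one must rule out that a clever Boolean combination of atomic predicates over several path-interior parameters separates two untouched examples. This is exactly where the uniform labeling and the unreachability of the branch points $w,w'$ are both used, reducing the claim to the observation that untouched examples are pairwise indistinguishable by their atomic type. For completeness I would note that $\ell=0$ is excluded: with a single free variable and no parameters a consistent formula, if one exists, depends only on the labels of the examples and is found in constant time, so the statement is intended for $\ell\geq 1$.
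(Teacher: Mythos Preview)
Your proposal is correct and follows essentially the same approach as the paper: the same tree family (a root with two subtrees, each carrying $\ell+1$ long paths whose leaves form the positive respectively negative examples), the same witnessing consistent hypothesis $\phi(x\smid y)=(y\leq x)$ with parameter $w$, the same locality argument that an $o(\abs{\tree})$ learner never reaches the branching nodes, and the same pigeonhole on $\ell$ parameters versus $\ell+1$ examples of each sign to force two examples of opposite sign into the same atomic type. Your treatment of the ``touches at most one example'' step is a bit more explicit than the paper's, and your side remark on $\ell=0$ is a reasonable caveat the paper does not make.
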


\begin{inlineproof} 
    For the contradiction assume that $L$ is such a sublinear learning algorithm. 
    Consider the family of trees $(\tree_m)_{m\in\mathbb{N}}$ shown in Figure \ref{fig:ce_sublinear_local} with $\ell$ fixed to the number of parameters used in the formulas returned by $L$.
    \begin{figure}[t]
      \vspace{-3ex}
      \begin{center}
         \pgfkeys{/forest,
    tria/.style={
      node format={
        \noexpand\node [
          draw,
          shape=regular polygon,
          regular polygon sides=3,
          inner sep=0pt,
          outer sep=0pt,
          \forestoption{node options},
          anchor=\forestoption{anchor}
         ]
        (\forestoption{name}) {\foresteoption{content format}};
      },
      child anchor=north,
      edge path={
        \noexpand\path[\forestoption{edge}]
          (!u.parent anchor) --
          (.child anchor)\forestoption{edge label};
      },
    },
  }

\begin{forest}
  for tree={inner sep=.1em,l sep=0
  },
    [$u$,edge label={node[midway, fill=white,inner sep=.2em,rotate=90]{$\dots$}},l=3em,
      [$v$,l=1.8em,s sep=0.7em,
        [$\overset{u_1}{+}$,edge label={node[midway, fill=white,inner sep=.1em,rotate=43]{$\dots$}},l=3.5em,tier=plus,name=pluslinks]
        [$\overset{u_2}{+}$,edge label={node[midway, fill=white,inner sep=.1em,rotate=60]{$\dots$}},tier=plus]
        [$\overset{u_3}{+}$,edge label={node[midway, fill=white,inner sep=.1em,rotate=83]{$\dots$}},tier=plus]
        [$\cdots$,no edge,tier=plus]
        [$\overset{u_{l+1}}{+}$,edge label={node[midway, fill=white,inner sep=.1em,rotate=-42]{$\dots$}},tier=plus,name=plusrechts]
      ]
      [$v'$,l=1.8em,name=middleright,s sep=0.1em,
          [$\overset{u_{l+2}}{-}$,edge label={node[midway, fill=white,inner sep=.1em,rotate=43]{$\dots$}},l=4em,l sep=2em,tier=plus,name=minuslinks]
          [$\overset{u_{l+3}}{-}$,edge label={node[midway, fill=white,inner sep=.1em,rotate=65]{$\dots$}},tier=plus]
          [$\overset{u_{l+4}}{-}$,edge label={node[midway, fill=white,inner sep=.1em,rotate=87]{$\dots$}},tier=plus]
          [$\cdots$,no edge,tier=plus]
          [$\overset{u_{2l+2}}{-}$,edge label={node[midway, fill=white,inner sep=.1em,rotate=-43]{$\dots$}} ,tier=plus,name=minusrechts]
      ]
    ]\draw[decorate, decoration={brace, amplitude=.5em}]
  ([xshift=.5em]middleright.east) --
  node[xshift=1.2em,yshift=.6em]{$m$}
  ([yshift=0.2em]minusrechts.north);
\draw[decorate, decoration={brace, mirror,amplitude=.5em}]
  (pluslinks.south west) --
  node[below=.5em]{$\ell+1$}
  (plusrechts.south east); 
\draw[decorate, decoration={brace, mirror,amplitude=.5em}]
  (minuslinks.south west) --
  node[below=.5em]{$\ell+1$}
  (minusrechts.south east); 
\end{forest}
       \end{center}
      \vspace{-3ex}
      \caption{Sketch of the tree $\tree_m$ parameterized by $m$ and $\ell$ from \cref{thm:no_subl_without_commonanc}. Substituting $v$ and $v'$ by balanced binary trees gives the binary $\tree_m$ used in \cref{thm:no_subl_without_commonanc}.
      }
      \label{fig:ce_sublinear_local}
    \end{figure}
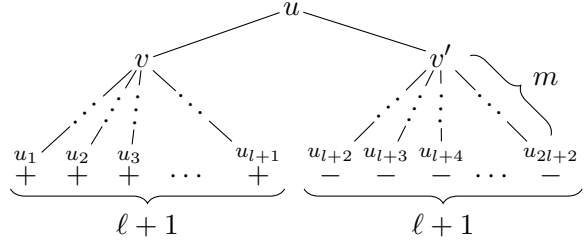
    We define the training set $\train$ of $2\ell+2$ examples as {\(\train = \{(u_1,+), \dots (u_{\ell+1},+), (u_{\ell+2},-),\dots, (u_{2\ell+2},-)\}\)}. 
    In Figure \ref{fig:ce_sublinear_local}, the positions of the nodes $u_i$ are indicated by their classifications $+$ or $-$.
    The size of $\tree_m$ is linear in $m$ for any fixed $\ell$ and therefore we can choose $m$ such that the running time of $L$ on $\tree_m$ is smaller than $m$.
    This is possible since $L$ runs in time $o(|\tree_m|)$.

    The formula $\phi(x\smid y) = y < x$ with parameter $v$ as indicated in \cref{fig:ce_sublinear_local} is consistent with~$\train$.
    Let $\psi(x\smid \bar y)\in \Phi_\sigma[\ell+1]$ be the formula and $v_1,\dots,v_\ell \in V(\tree_m)$ the parameters returned by $L$ on input $\tree_m$ and $\train$.
    Then each $v_i$ is ancestor to exactly one $u \in \train$ as $L$ may only return nodes $v_1,\dots,v_\ell$ it has seen during its computation.
    All example nodes $x_i\in \train$ are leaves of $\tree_m$.
    By choosing the runtime of $L$ on $\tree_m$ to be strictly less than $m$ while the paths adjacent to each leaf have length $m$, $L$ only sees nodes from those paths such that only those can be returned.

    Since $\psi$ is a quantifier-free formula and thus can only directly compare free variables, every parameter $v_i\in V(\tree)$ has the same effect on every example except (possibly) the one below $v_i$.
    There are $\ell+1$ positive and $\ell+1$ negative examples but only $\ell$ parameter, thus at least one example will be misclassified.
    Intuitively this holds as all paths locally look identical and the learning algorithm $L$ is unable to detect on which side of the tree $\tree_m$ an example $u\in\train$ is located.
\end{inlineproof}

In the following we extend the notion of local access to cope with the globality of the ancestor relation.

\subsection{Extended local access}\label{subsec:extended_la}

Our next result states that synthesizing the parameters for quantifier-free formulas is really the core of the linear complexity.
\emph{Extended local access} extends local access by a \emph{common ancestor oracle}: 
\begin{description}
    \item[Common Ancestor Oracle] On input of two nodes $u$ and $v$, it returns
        the lowest node $w$ such that $w\leq u$ and $w\leq v$ (their \emph{lowest common ancestor})
\end{description}

For the example from \cref{thm:no_subl_without_commonanc} a consistent parameter $v$ can be found in constant time using extended local access.
This is generalized in the following theorem.
We show that with the common ancestor oracle quantifier-free formulas are learnable in sublinear time, while without this oracle linear time is necessary.
Let $\ell\in\NN$ be an integer.
For binary trees we use the signature $\sigma=\{E_1,E_2,\leq\}$ with relations for the first and second child as well as the ancestor relation $\leq$.
For unranked trees we use the signature $\sigma=\{E_1,E_2,\leq,\preceq\}$ with edge relations for the first child ($E_1$) and the next sibling ($E_2$).
The relation $\preceq$ is an order over the children of each node, defined as the reflexive transitive closure of $E_2$.

\begin{theorem}\label{th:sub_qf_binary}
    There is a learning algorithm that, on input of a tree $\tree$ and a training set $\train$, outputs a consistent hypothesis $\model{\tree}{\phi}{x}{\bar v}$, where $\phi\in\Phi_\sigma[1+\ell]$ and $\bar v\in V(\tree)^{\ell}$ for binary trees and $\phi\in\Phi_\sigma[1+2\ell]$ ($\bar v \in V(\tree)^{2\ell}$) for unranked trees.
    This algorithm uses extended local access and runs in time $\onot\left(|\train|^2+|\train|^{\ell}|\train|\right)$.
\end{theorem}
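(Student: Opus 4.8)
The plan is to reduce consistency to a purely combinatorial separation condition on the parameters and then to search for suitable parameters over a small candidate set built from the training examples. Since $\phi$ is quantifier-free, the truth value of $\phi(u,\bar v)$ depends only on the atomic type of $(u,\bar v)$, and the atoms \emph{among} the parameters $\bar v$ do not depend on $u$. Hence $\phi(u,\bar v)$ is determined by the \emph{$x$-type} $\theta_{\bar v}(u)$, the conjunction of literals recording, for each relation of $\sigma$ and each coordinate $v_i$, whether it holds between $x=u$ and $y_i=v_i$ in either argument order. I would first observe that, since (by the model-learning assumption) a consistent $\ell$-parameter hypothesis $\bar v^{\ast}$ exists, $\bar v^{\ast}$ must assign different $x$-types to every positive/negative pair: two examples with equal $x$-type are classified identically by every quantifier-free formula. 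Conversely, from any $\bar v$ that \emph{type-separates} $\train$ I can build a consistent hypothesis directly by taking $\phi:=\bigvee_{(u,+)\in\train}\theta_{\bar v}(u)$, a formula of $\Phi_\sigma[1+\ell]$ that accepts exactly the examples whose $x$-type matches a positive one. Thus the task reduces to finding a type-separating $\bar v\in V(\tree)^\ell$.

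Next I would restrict the search to a candidate set $C$ obtainable by extended local access: the example nodes, their $1$-neighborhoods (via the neighborhood oracle, to realize the edge atoms $E_1,E_2$), and all pairwise lowest common ancestors $\mathrm{lca}(u,u')$ for $u,u'\in\train$ (via the common-ancestor oracle). Computing the pairwise LCAs costs $\onot(|\train|^2)$ queries, and as the LCA-closure of $k$ nodes has only $\onot(k)$ nodes, $|C|=\onot(|\train|)$. The algorithm enumerates all $\ell$-tuples $\bar v\in C^\ell$, of which there are $\onot(|\train|^\ell)$; for each it computes $\theta_{\bar v}(u)$ for every $u\in\train$ with $\onot(\ell)$ relation-oracle calls per example, i.e. $\onot(|\train|)$ per tuple, and tests whether some positive and negative example share a type. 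The first separating tuple yields $\phi$ as above, giving total running time $\onot(|\train|^2+|\train|^\ell|\train|)$.

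The heart of the argument, and the step I expect to be hardest, is the candidate-sufficiency lemma: if some $\bar v^{\ast}\in V(\tree)^\ell$ type-separates $\train$, then so does some $\bar v\in C^\ell$. I would prove it coordinate by coordinate, replacing each $v_i^{\ast}$ by a node of $C$ that preserves every distinction $v_i^{\ast}$ makes between examples (preserving distinctions suffices, since separation only needs every positive/negative pair distinguished by some coordinate). Writing $A$, $D$, $I$ for the examples that are ancestors, descendants, and incomparable to $v_i^{\ast}$: if $v_i^{\ast}$ is equal or adjacent to an example it already lies in $C$ and is kept; if $D\neq\emptyset$ I replace it by $\mathrm{lca}(D)\in C$, where a short argument using the chain structure of ancestor sets shows $A$, $D$, $I$ are unchanged (up to $v_i^{\ast}$'s own equality atom, which can only refine types and never merge classes); if $D=\emptyset$ but $A\neq\emptyset$, the example-ancestors of $v_i^{\ast}$ coincide with those of $a^{\ast}:=\max A$, so replacing $v_i^{\ast}$ by the example $a^{\ast}\in C$ preserves the only relevant atom $u\leq v_i^{\ast}$; and if $A=D=\emptyset$ then $v_i^{\ast}$ is incomparable and non-adjacent to all examples, contributes identical (all-false) atoms everywhere, and may simply be dropped and padded by a dummy example. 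Iterating over $i$ produces the desired $\bar v\in C^\ell$.

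For unranked trees the same scheme applies with the extra sibling order $\preceq$, whose restriction to the children of a fixed node is linear; separating examples along $\preceq$ is exactly the string problem treated in \cite{DBLP:conf/alt/GroheLR17}, so $C$ must additionally contain sibling separators, again furnished by examples and their neighbors. The new difficulty is that a single parameter $v_i^{\ast}$ may simultaneously exploit the ancestor order (through its subtree) and the sibling order (through its siblings), and no single candidate need capture both roles; I would therefore simulate each original parameter by two candidates, one chosen as in the binary case for its $\leq$-behavior and one sibling node for its $\preceq$-behavior. This is precisely why the output uses $2\ell$ parameters, giving $\phi\in\Phi_\sigma[1+2\ell]$, while the enumeration, type computation, formula construction, and running-time bound remain otherwise identical.
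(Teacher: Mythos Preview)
Your proposal is correct and follows essentially the same route as the paper. Both arguments build a candidate set of size $\onot(|\train|)$ from the examples, their local neighborhoods, and their LCA-closure, prove a coordinate-by-coordinate replacement lemma via the same three-case analysis ($\examplesFamily$, $\examplesBelow$, $\examplesAbove$), and then brute-force over $\ell$-tuples from the candidate set; the unranked case is handled identically, splitting each parameter into a $\leq$-witness and a $\preceq$-witness. The only cosmetic differences are that you phrase the reduction as ``type separation'' and build $\phi$ explicitly as the disjunction of positive $x$-types, whereas the paper phrases it as a ``$k$-sufficient set'' and iterates over all quantifier-free formulas, and that the paper takes the $2$-neighborhood of the full LCA-closure while your $1$-neighborhood of the examples plus pairwise LCAs already suffices under your refinement argument. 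One small wording slip: in your $D\neq\emptyset$ case you say changes are ``up to $v_i^\ast$'s own equality atom'', but edge atoms with the new node $\mathrm{lca}(D)$ may also flip; this is harmless for exactly the reason you state (atoms going from uniformly false to non-uniform can only refine the type partition), so the argument stands.
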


\begin{inlineproof}[Proof sketch.]
  Let $\tree$ be a tree and $\train$ a training set.
  A set of nodes $\suff\subseteq V(\tree)$ is \emph{$k$-sufficient} for $\train$ if for every consistent hypothesis based on $\phi\in\Phi_\sigma[1+\ell]$, there is a consistent hypothesis based on $\phi'\in\Phi_\sigma[1+k\ell]$ using only parameters from $\suff$. 
  Note that sufficiency holds for all formulas and therefore $\suff$ does not depend on $\phi$.
  In our learning scenario this means that we can restrict the search for consistent parameters to such a sufficient set.
  For every $\train$, the set $V(\tree)$ is clearly $1$-sufficient as all parameters are nodes of $\tree$.
  We can always choose $\phi'$ such that it does not contain subformulas of the form $R(y_i,y_j)$ or $R'(y_i)$, where $y_i,y_j$ are parameters.
  In that case only the relations between parameters and examples have an impact on the consistency of the corresponding hypothesis.

  We say that for a signature $\sigma$, two nodes $v_1,v_2\in V(\tree)$ share the same \emph{relative position} with respect to a training set $\train$ if for every binary $R\in\sigma$ and every $x\in \train$ we have that $R(x,v_1) \equiv R(x,v_2)$ and $R(v_1,x) \equiv R(v_2,x)$.
  Two nodes $u,v$ with the same relative position to $\train$ cannot be distinguished by a quantifier-free formula and thus any set containing at least one representative of every relative position is $1$-sufficient. 
  Note that for every $\phi \in \Phi[k,\ell]$ with parameter $\bar v$ we can find a formula $\phi'\in\Phi[k,\ell]$ such that $\phi'(u,\bar v')$ accepts the exact same examples from $\train$ as $\phi(u,\bar v)$ given that $\bar v$ and $\bar v'$ share the same relative position to $\train$.
  Again this holds as the classification of $u$ only depends on the relations between $u$ and the parameters which by definition are identical for $\bar v$ and $\bar v'$.

  Let $\suff$ be defined as the $2$-neighborhood of $\text{LCA}(\train)$, where $\text{LCA}(\train)$ is the closure of $\train$ under lowest common ancestors.
  For unranked trees we use the local $2$-neighborhood of a node $u\in \tree$ for $\suff$, defined in Section \ref{subsec:access_model} as the set containing $u$ and the parent, first child, as well as left and right sibling of $u$.
  The set $\suff$ is linear in $|\train|$ and its sufficiency can be shown by a case-distinction outsourced in the following two claims whose proofs are deferred until after this proof.
  \begin{claim}\label{claim:sufficient_binary}
    $\suff$ is $1$-sufficient for $\train$ on binary trees.
  \end{claim}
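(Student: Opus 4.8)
The plan is to reduce the claim to a statement about how a \emph{single} parameter can relate to the training examples and then to realise each such relation inside $\suff$ using the branching structure of the lowest-common-ancestor closure. Fix a consistent hypothesis $\model{\tree}{\phi}{x}{\bar v}$ with $\phi\in\Phi_\sigma[1+\ell]$ and, by the normalisation already noted above, assume every atom of $\phi$ either speaks only of the instance variable $x$ or relates $x$ to exactly one parameter $y_j$. For a node $w$ let $S_{\mathrm{below}}(w)=\{u\in\train:w\leq u\}$ be the examples in its subtree and $S_{\mathrm{above}}(w)=\{u\in\train:u\leq w\}$ the examples that are its ancestors. The only atoms that can involve $y_j$ are $\leq(x,y_j)$, $\leq(y_j,x)$, and the four first/second-child atoms; on $\train$ these evaluate to the sets $S_{\mathrm{above}}(v_j)$, $S_{\mathrm{below}}(v_j)$, and to singletons (or $\emptyset$) that are non-empty only when $v_j$ is a parent or a child of an example. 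I would replace each parameter independently, producing a modified formula $\phi'\in\Phi_\sigma[1+\ell]$ and parameters in $\suff$ that classify every example exactly as before.

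First I would dispose of the easy cases. If $v_j$ is adjacent to an example it already lies in the $1$-neighbourhood of $\text{LCA}(\train)$, hence in $\suff$, and I keep it unchanged. Otherwise every child-atom of $y_j$ is false on all of $\train$, so the behaviour of $y_j$ on the examples is governed solely by the two sets $S_{\mathrm{above}}(v_j)$ and $S_{\mathrm{below}}(v_j)$. Here the key observation is that any atom whose truth set on $\train$ equals $\emptyset$ (respectively $\train$) may be rewritten as $\false$ (respectively $\true$) inside $\phi$ without changing the classification of a single example. Consequently only the sets that are \emph{proper and non-empty} need be reproduced by the replacement parameter; every other atom of $y_j$ is frozen to a constant. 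It therefore suffices to exhibit $v_j'\in\suff$ with $S_{\mathrm{below}}(v_j')=S_{\mathrm{below}}(v_j)$ whenever the latter is a proper non-empty subset of $\train$, and likewise for $S_{\mathrm{above}}$.

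To find such a $v_j'$ I would locate $v_j$ relative to the closure $\text{LCA}(\train)$. If $S_{\mathrm{below}}(v_j)$ is proper and non-empty, then $w=\text{LCA}(S_{\mathrm{below}}(v_j))$ is a closure node with $S_{\mathrm{below}}(w)=S_{\mathrm{below}}(v_j)$, and $v_j$ must lie on the tree-path strictly below the closure-parent of $w$ and at or above $w$. Taking $v_j'=w$ (a closure node, hence in $\suff$), or the tree-parent of $w$ when $v_j$ is strictly above $w$ (a node at distance one from $w\in\text{LCA}(\train)$, hence in $\suff$), reproduces $S_{\mathrm{below}}$, and because no example lies strictly inside that path it reproduces $S_{\mathrm{above}}$ as well; this handles in particular the case where both sets are proper and non-empty. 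If only $S_{\mathrm{above}}(v_j)$ must be matched I would take $v_j'$ to be the lowest example that is an ancestor of $v_j$: it lies in $\train\subseteq\suff$ and has exactly the same chain of ancestor-examples. If neither set must be matched, any node of $\text{LCA}(\train)$ serves. In each case the rewritten $\phi'$ accepts the same examples, proving $1$-sufficiency.

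The step I expect to be the real obstacle is recognising that $\suff$ need \emph{not} contain a node of the same relative position as an arbitrary node of $\tree$: a parameter buried deep in an example-free subtree that hangs off the middle of a long closure edge can have a relative position realised nowhere in $\suff$. The point that makes the argument go through is that such a parameter exercises only the single non-trivial relation given by its ancestor-examples, so that after freezing the all-true and all-false atoms it can still be re-parameterised inside $\suff$ even though no exact positional twin exists there. Making this precise --- in particular checking that $S_{\mathrm{above}}$ and $S_{\mathrm{below}}$ can be matched \emph{simultaneously} in the remaining case via the closure-edge analysis above --- is where the care is needed.
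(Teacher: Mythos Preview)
Your proposal is correct and follows essentially the same approach as the paper: both argue parameter-by-parameter, distinguish the case where $v_j$ is adjacent to an example (so already in $\suff$) from the case where only the ancestor relations matter, and in the latter case replace $v_j$ by $\text{LCA}(S_{\mathrm{below}}(v_j))$ (or, when $S_{\mathrm{below}}$ is empty, by the lowest example above $v_j$), modifying $\phi$ as needed. Your treatment is in fact slightly more careful than the paper's: by systematically freezing atoms that are constantly $\true$ or $\false$ on $\train$, and by allowing the tree-parent of $w$ as a replacement when $v_j$ lies strictly above $w$, you handle the simultaneous matching of $S_{\mathrm{above}}$ and $S_{\mathrm{below}}$ explicitly, whereas the paper's case ``$S_{\mathrm{below}}(v)\neq\emptyset$'' simply takes $w=\text{LCA}(S_{\mathrm{below}}(v))$ without spelling out why $S_{\mathrm{above}}$ is also preserved (or how $\phi$ is adjusted when it is not).
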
  

  \begin{claim}\label{claim:sufficient_unranked}
    $\suff$ is $2$-sufficient for $\train$ on unranked trees.
  \end{claim}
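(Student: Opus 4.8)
The plan is to prove the stronger statement that a \emph{single} parameter can always be simulated by two parameters drawn from $\suff$. By the observations preceding the claims we may assume $\phi$ contains no atoms of the form $R(y_i,y_j)$ or $R'(y_i)$, so the classification of an example depends only on the atoms $R(x,y_i)$ and $R(y_i,x)$ relating the instance variable to the parameters. Hence it suffices to show: for every $v\in V(\tree)$ there are nodes $w_1,w_2\in\suff$ such that each atom $R(x,v)$ and $R(v,x)$ with $R\in\sigma$ can be rewritten as a quantifier-free formula over $x,w_1,w_2$ that agrees with the original atom on every $u\in\train$. Substituting these rewritings into $\phi$ for each of the $\ell$ parameters then yields a consistent $\phi'\in\Phi_\sigma[1+2\ell]$ using only parameters from $\suff$ (padding with unused variables where a single node already suffices).

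If $v\in\suff$ I simply keep $v$. So assume $v\notin\suff$. Then $v$ has neighbourhood-distance greater than $2$ from $\text{LCA}(\train)\supseteq\train$, so no example lies within distance $2$ of $v$; in particular all atoms $E_1(x,v),E_1(v,x),E_2(x,v),E_2(v,x)$ evaluate to $\false$ on every example and are rewritten as $\false$. The only atoms that can still be nontrivial are therefore the ancestor atom $\leq(x,v)$, the descendant atom $\leq(v,x)$, and, on unranked trees, the sibling-order atoms $\preceq(x,v)$ and $\preceq(v,x)$.

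For the descendant atom I would set $w_2$ to be the lowest common ancestor of all examples lying in the subtree of $v$ (rewriting $\leq(v,x)$ as $\false$ when there are none). Since $v\notin\suff$ cannot itself be an $\text{LCA}$ of examples, this node is a strict descendant of $v$, lies in $\text{LCA}(\train)\subseteq\suff$, and satisfies $\leq(v,u)\equiv\leq(w_2,u)$ for every $u\in\train$. For the remaining atoms I distinguish whether $v$ has an example among its siblings. If not, the $\preceq$-atoms are rewritten as $\false$ and I take $w_1$ to be the deepest example that is an ancestor of $v$, which lies in $\train\subseteq\suff$ and yields $\leq(x,v)\equiv\leq(x,w_1)$ on every example. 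If $v$ does have example-siblings, I instead take $w_1$ to be the immediate sibling-neighbour of the example-sibling closest to $v$ in the child order of their common parent; this separator lies in the local $2$-neighbourhood of that example and is thus in $\suff$. Being a sibling of $v$ it shares all ancestors of $v$, hence reproduces $\leq(x,v)$, and by construction it sits strictly between the same example-siblings as $v$, hence reproduces both $\preceq(x,v)$ and $\preceq(v,x)$ on every example.

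The main obstacle, and the reason two parameters are genuinely needed here whereas one suffices on binary trees (Claim~\ref{claim:sufficient_binary}), is the order $\preceq$ on the children of unranked nodes: a parameter can be an arbitrarily far sibling of an example, so its sibling-order position is an essentially independent coordinate that a single $\suff$-node cannot in general match simultaneously with the descendant set captured by $w_2$. The delicate verifications will be that the separator always exists --- using that $v\notin\suff$ forces a non-example gap between the consecutive example-siblings bracketing $v$ --- and that the resulting $\preceq$-thresholds agree with those of $v$ on all examples, including the boundary cases where $v$ lies before all or after all of its example-siblings.
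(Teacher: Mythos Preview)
Your proposal is correct and proves the claim, but via a different two-parameter decomposition than the paper. The paper separates by \emph{relation}: one replacement node $w$ handles both $\leq$-atoms (directly reusing the binary-tree argument of Claim~\ref{claim:sufficient_binary}), and an independent node $w_2$ handles both $\preceq$-atoms, with a case split on whether the example-siblings lie to the left or right of $v$. You instead separate by \emph{direction}: $w_2$ takes care of the descendant atom $\leq(v,x)$ alone, while $w_1$ simultaneously reproduces the ancestor atom $\leq(x,v)$ and both $\preceq$-atoms. The observation that makes your split work --- that any sibling of $v$ has exactly the same proper ancestors as $v$, so a well-chosen sibling separator can serve double duty for the ancestor test and the child-order test --- is not exploited in the paper and gives a slightly tighter argument (you need only the $1$-neighbour of the closest example-sibling, whereas the paper reaches two steps out).

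Two small gaps to close. First, in the sub-case with no example-siblings you set $w_1$ to the deepest example ancestor of $v$, but you do not say what happens when $v$ has no example ancestors either; there you should rewrite $\leq(x,v)$ as $\false$, just as you already rewrite $\leq(v,x)$ when $S_\text{below}(v)=\emptyset$. Second, ``immediate sibling-neighbour'' should be disambiguated as the neighbour on the side of $v$; then the facts you flag as delicate --- that this neighbour exists, is not itself an example (otherwise it would be closer to $v$), and is not $v$ (since $v\notin\suff$ forces sibling-distance $\geq 3$) --- all follow, and with them the agreement of $\preceq(x,w_1),\preceq(w_1,x)$ with $\preceq(x,v),\preceq(v,x)$ on every example, including the boundary cases.
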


  In order to find a consistent hypothesis, the learning algorithm tests every possible quantifier-free formula for every parameter setting $\bar v \in (\suff)^\ell$ for consistency with $\train$.
  It is bound to find a consistent hypothesis since $\suff$ is sufficient.
  The set $\suff$ can be computed in time $\onot(|\train|^2)$ using extended local access.
  Since each evaluation of a quantifier-free formula needs constant time, this leads to the runtime $\onot\left(|\train|^2+|\train|^{\ell}|\train|\right)$ stated in the theorem.
\end{inlineproof}

\begin{inlineproof}[Proof of \cref{claim:sufficient_binary}]
  For every node $x \in V(\tree)$, we consider the following sets of nodes describing its relative position to $\train$. 
  \begin{align*}
    \examplesFamily &=\{x \mid x \in \train, E_1(v,x) \lor E_2(v,x) \lor E_1(x,v) \lor E_2(x,v)\};\\
    \examplesAbove  &=\{x \mid x \in \train, x\leq v)\};\\
    \examplesBelow  &=\{x \mid x \in \train, v\leq x)\}. 
  \end{align*}

  Sufficiency of $\suff$ means that if we have a consistent hypothesis, there is also one that uses only parameters from $\suff$, possibly based on a different formula.
  Let $\phi \in \Phi_\sigma[k,\ell]$ be a quantifier-free formula and $\bar v\in V(\tree)^\ell$ a parameter vector such that $\model{\tree}{\phi}{x}{\bar v}$ is consistent with $\train$.
  Let $v=v_i \in \bar v$ be a parameter from $\bar{v}\in V(\tree)^\ell$.
  We prove that there exists a node $w$ in $\suff$ such that $\model{\tree}{\phi'}{x}{v_1,\dots,v_{i-1},w,v_{i+1},\dots,v_\ell}$ is consistent if $\model{\tree}{\phi}{x}{v_1,\dots,v_\ell}$ is.
  That means that we can essentially substitute $v_i$ by $w$, inducing small changes on $\phi$, without changing the acceptance behavior for any $x\in \train$.

  If $S_\text{family}(v)\neq\emptyset$, then $v\in \suff$ and we define $w=v$. 
  In the following case distinction we therefore have a distance of at least $2$ to every example.
  The last two cases are illustrated in \cref{fig:quantifierFree:aboveAndBelow}.

  \begin{itemize}
    \item If $\examplesAbove = \examplesBelow{} = \examplesFamily{} = \emptyset$, the node $v$ is unrelated to every example node. 
      In that case the parameter $v_i$ has the same influence on every $x\in \train$ and we can replace all atoms containing $y_i$ by the boolean $\false$.
      Alternatively we could add an additional node $\bot$ which occurs in no relation and assign $w=\bot$.
    \item If $\examplesBelow{}\neq\emptyset$, we choose $w$ as the lowest common ancestor of $\examplesBelow{}$ which clearly is in $\suff$ since $\examplesBelow \subseteq \train$.
    \item If $\examplesBelow{} = \emptyset$ but $\examplesAbove\neq\emptyset$, we let $w$ be the lowest node of $\examplesAbove$. 
      In this case we have to replace every atom of the form $(\neg x \leq y_i)\equiv y_i>x$ by $\false$. 
      This step assumes that $\phi$ is given in negation normal form (NNF), that is negations only appear at the atoms. 
      The substitution is necessary as we moved $v$ upwards and thus possibly $S_\text{below}(w)\neq \emptyset$.
  \end{itemize}

  \begin{figure}
    \begin{minipage}{0.5\textwidth}
      \begin{tikzpicture}
  \node[] (root) at (8,8) {root};
  \node[] (middle1) at (6.6,6.6) {$x$};
  \node[] (middle2) at (5.8,5.8) {$x$};
  \node[] (w) at (5,5) {$x$};
  \node[right=of w] (newText) {$w$};
  \draw[<-] (w) -- (newText);
  \node[] (v) at (4,4) {$v$};
  \node[] (x) at (6.75,2.75) {$x$};
  \draw[decorate,decoration={snake,post length=0mm}] (root) -- (middle1) -- (middle2) -- (w);
  \draw[decorate,decoration={snake,post length=0mm}] (w) -- (v);
  \draw[decorate,decoration={snake,post length=0mm}] (w) -- (x);
  \node[] (text) at (4,2) {no examples};
  \draw[] (v) -- (text.north west);
  \draw[] (v) -- (text.north east);
\end{tikzpicture}
     \end{minipage}    \begin{minipage}{0.5\textwidth}
      \usetikzlibrary{decorations,decorations.pathmorphing}

\begin{tikzpicture}
  \node[] (root) at (8,8) {root};
  \node[] (middle1) at (6.6,6.6) {$x$};
  \node[] (middle2) at (5.8,5.8) {$x$};
  \node[] (v) at (5,5) {$v$};
  \node[] (w) at (4,4) {$w$};
  \draw[decorate,decoration={snake,post length=0mm}] (root) -- (middle1) -- (middle2) -- (v);
  \draw[decorate,decoration={snake,post length=0mm}] (v) -- (w);
  \node[] (text) at (4,2) {some examples};
  \draw[] (w) -- (text.north west);
  \draw[] (w) -- (text.north east);

  \node[right=of w] (newText) {$\text{LCA}(S_\text{below}(v))$};
  \draw[<-] (w) -- (newText);
\end{tikzpicture}
     \end{minipage}
  
  \caption{Left: The case that all related examples are above the parameter $v$. Right: The setting that there are examples in the subtree below $v$.}\label{fig:quantifierFree:aboveAndBelow}
  \end{figure}
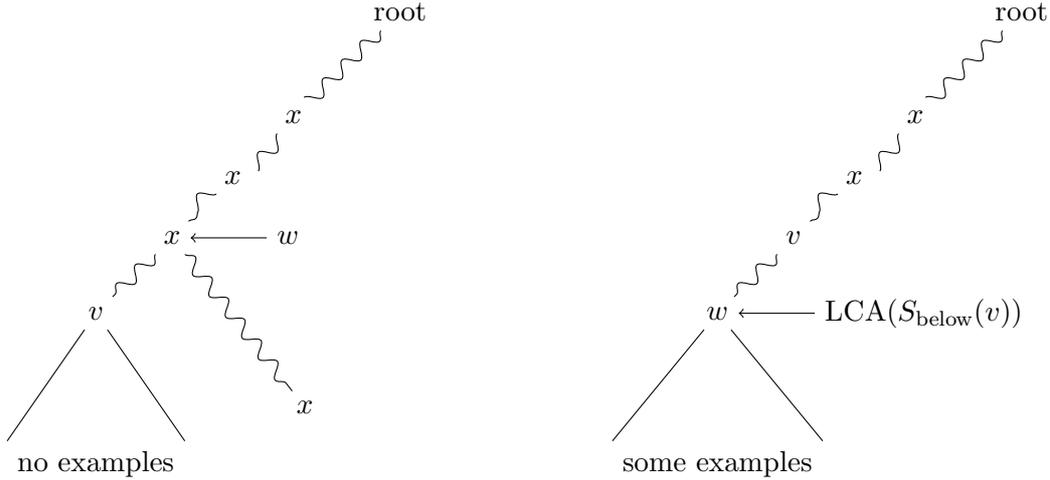
  This shows the claim.
\end{inlineproof}

\begin{inlineproof}[Proof sketch for \cref{claim:sufficient_unranked}]
  For every node $x \in V(\tree)$, we consider the following sets of nodes describing its relative position to $\train$ in an unranked trees. 
  \begin{align*}
    \examplesFamily &=\{x \mid x \in \train, E_1(v,x) \lor E_2(v,x) \lor E_1(x,v) \lor E_2(x,v)\};\\
    \examplesAbove  &=\{x \mid x \in \train, x\leq v)\};\\
    \examplesBelow  &=\{x \mid x \in \train, v\leq x)\}; \\
    \examplesLeft   &=\{x \mid x \in \train, v\prec x\}; \\
    \examplesRight  &=\{x \mid x \in \train, x\prec v\}.
  \end{align*}
  
  Let $v=v_i \in \bar v$ be a parameter from $\bar{v}\in V(\tree)^\ell$.
  We prove that there exist up to two nodes $w,w_2\in\suff$ simulating the same relative position in a slightly adapted quantifier-free formula.
  More formally, there exist $w,w_2\in \suff$ and $\phi'$ such that $\model{\tree}{\phi'}{x}{v_1,\dots,v_{i-1},w,v_{i+1},\dots,v_\ell}$ is consistent if $\model{\tree}{\phi}{x}{v_1,\dots,v_\ell}$ is.

  If $S_\text{family}(v)\neq\emptyset$, then $v\in \suff$ and we define $w=v$.
  Otherwise we consider the relations $\leq$ and $\prec$ independently by choosing $w$ to be in the same relative position as $v$ with regard to $\leq$ and $w_2$ to be in the same relative position as $v$ with regard to $\prec$.
  We modify $\phi$ in order to make use of  $w,w_2$ and therefore introduce an additional free variable $y'_i$.
  The new formula $\phi'$ is generated from $\phi$ by the the following substitutions of atoms in $\phi$, again we assume $\phi$ to be in NNF:
  \begin{itemize}
    \item Substitute $\leq(x,y_i)$ or $\leq(y_i,x)$ by $R(x,y'_i)$ and
    \item Substitute $R(x,y_i)$ or $R(y_i,x)$ with $R\in\{=,E_2,E_1\}$ by $\false$.
  \end{itemize}

  For $\leq$ we choose $w$ as in \cref{claim:sufficient_binary}.
  For $\prec$ we have the following case distinction.
  \begin{itemize}
    \item If $\examplesLeft = \examplesRight = \emptyset$, then none of the examples is a sibling and we can substitute every atom of the form $x \prec y_i'$ and $y_i' \prec x$ by $\false$.
    \item If $\examplesLeft \neq \emptyset$, then if the siblings of $v$ are numbered $u_1,\dots, u_k$ and $j$ is the largest index such that $u_j\in\examplesLeft$, we choose $w_2=u_{j+2}$. 
      This node $u_{j+2}$ exists since $v$ is not a direct neighbor of any $u\in \train$ and $\examplesLeft\neq \emptyset$.
    \item Otherwise $\examplesRight \neq \emptyset$. 
      Then let $u_1,\dots,u_k$ be the siblings of $v$ and $j$ the smallest index such that $u_j\in \examplesRight$ and choose $w_2 = u_{j-2}$.
      The node $u_{j-2}$ exists by the same argument as in the previous case.
  \end{itemize}

  Note that if either $\examplesFamily \neq \emptyset$, $\examplesLeft = \examplesRight = \emptyset$ or $\examplesAbove = \examplesBelow = \emptyset$ we only need a single parameter.
\end{inlineproof}

\cut{
\begin{toappendix}
We fix the two integers $\ell,k$.
Let $\tree$ be a tree, $\train$ a training set, and $k,\ell$ integers. 
Suppose there exists a consistent hypothesis $\model{\tree}{\phi}{x_1,\dots,x_k}{v_1,\dots,v_\ell}$. 
We denote by $\suff$ the set we are proving to be $1$ or $2$-sufficient for binary and unranked trees respectively, that is the $2$-neighborhood of the example nodes in $\train$ and of all their common lowest ancestors. 
The goal is to show the existence of a consistent hypothesis while restricting the choice of parameters to $\suff$.

The proof is divided in two main steps. 
First, we will prove that we can restrain the search of a formula to a subset $\Phi'_\sigma[k,\ell]$ of quantifier-free formulas. 
Second, we will prove that a consistent hypothesis can be built using formulas in $\Phi'_\sigma[k,\ell]$ for binary trees and $\Phi'_\sigma[k,2\ell]$ for unranked trees, with parameters from $\suff$.

  In the following four cases a single parameter, as for the binary case, suffices.
  \begin{itemize}
    \item If all sets are empty, the node $v$ is unrelated to every example node and $y$ has the same influence on every $x\in \train$.
      Thus we can replace all atoms containing $y_i$ by the boolean $\false$ or use $w=\bot$ for an additional node $\bot$ which occurs in no relation.
    \item  Else, if $\examplesBelow{}$ is the only non-empty set, we (again) choose $w$ as the lowest common ancestor of $\examplesBelow{}$.
    \item Else, if $\examplesLeft{})=\examplesRight{}=\emptyset$, then $\examplesAbove{}\neq\emptyset$. \mrrand{why is that?}
      As in the binary case we choose $w$ as the lowest node of $\examplesAbove{}$ and replace all atoms of the form $R(x_j,y_{i+1})$, $R(y_{i+1},x_j)$ for any binary relation $R\in \sigma\setminus \braces{\leq}$ by $\false$.
    \item If $\examplesLeft{}\cup \examplesRight{}\neq\emptyset$, but $\examplesAbove{}\cup \examplesBelow{}=\emptyset$, then there is a node $w\in \suff$ with the same horizontal relative position. 
      Again we replace all occurrences of atoms of the form $y_{i+1}\leq x_j$ or $x_j\leq y_{i+1}$, corresponding to the empty sets $\examplesAbove{}$ and $\examplesBelow{}$ by $\false$.
  \end{itemize}

  The remaining case is the one where the sets $S_\text{left}(x)\cup S_\text{right}(x)$ and $S_\text{above}(x)\cup S_\text{below}(x)$ are non empty.
  This case uses $2$ variables instead of one to mimick the relative position of $v$, thus causing the $2$-sufficiency of $\suff$.
  As $\examplesLeft\cup \examplesRight{}\neq\emptyset$, there is a node $w\in \suff$ with the same horizontal relative position as $v$ following the same argument as in the binary case.
  As $\examplesAbove{}\cup \examplesBelow{}\neq\emptyset$, there is a node $w_2\in \suff$ with the same vertical relative position as $v$.
  For this any node $w_2$ between the maximal node from $\examplesLeft{}$ and the minimal node of $\examplesRight{}$ suffices and such nodes are included in $\suff$.

\begin{lemma}\label{lemma:sub_sufficient_k_ary}
  Then there exists a sufficient set of size $\onot(k|\train|)$ which is computable in time $\onot((k|\train|)^2\log |\tree|)$ using extended local access.
  Let $\tree,\train,\ell$ be instances of the $k$-ary model learning problem for
  quantifier-free formulas on unranked trees. The $2$-neighborhood of the
  example nodes and of their common ancestors is a $2$-sufficient set of size
  $O(k|\train|)$, computable in time $O((k|\train|)^2\log|\tree|)$.
\end{lemma}

In both cases, we prove a more general case with quantifier-free formulas $\phi\in \Phi[k+\ell]$: those proofs extend to the case where the examples are not single positions, but tuples of positions.
The $k$-ary model learning problem is the model learning problem described in Section \ref{sec:preliminaries} with the difference that examples are $k$-tuples.

\end{toappendix}
}

\newcommand{\monoid}{M}
\newcommand{\monoidmult}{\cdot_M}
\newcommand{\hp}{\text{hp}}    \newcommand{\decomp}{\hp}
\newcommand{\dfa}{\mathcal{A}}
\newcommand{\dta}{\mathfrak{A}}
\newcommand{\rightChild}{\text{R}}
\newcommand{\leftChild}{\text{L}}
\newcommand{\lr}{\braces{\leftChild{},\rightChild{}}}
\newcommand{\leaf}{\text{leaf}}
\newcommand{\parameters}{2^{\{y_1,\dots,y_\ell\}}} \newcommand{\funcAlphabetExtensionForDfa}{f_\rho}
\newcommand{\algo}{L}

\newcommand{\innermonoid}{\mathcal{M}}
\newcommand{\outermonoid}{{\hat{\mathcal{M}}}}
\newcommand{\freemonoid}{M_\text{free}}
\newcommand{\alphabetInnerMonoidStrings}{\Sigma \times \{P,N,?\} \times 2^{\{y_1,\dots,y_\ell\}}}
\newcommand{\freeToInnerMorphism}{h}
\newcommand{\freeToOuterMorphism}{\hat h}
\newcommand{\parameterSelection}{\project_{\bar y}}
\newcommand{\monoidParameters}{p}
\newcommand{\stateParameters}{p'}
\newcommand{\factorizationTree}{{F}} \newcommand{\acceptingStates}{F}
\newcommand{\project}{\pi}

\section{Monadic Second-Order Logic} \label{sec:MSO}

In this section we consider learning of unary MSO formulas on trees.
The ordering relation~$\leq$ used in the previous chapter can be expressed in MSO, thus we do not include it in the signature.
Given a binary tree $\tree$ we want to learn a unary relation $R\subseteq V(\tree)$ using an MSO formula $\phi(x\smid y_1,\dots,y_\ell)$ and $\ell$ parameters $v_1,\dots,v_\ell \in V(\tree)$. 
We aim for simple concepts and therefore restrict the number of parameters $\ell$ and the quantifier rank $q$ of $\phi$.
Let $\mso[q,\ell+1]$ be the class of MSO formulas with $\ell+1$ free variables and quantifier rank up to $q$.
Note that the class $\mso[q,\ell+1]$ is finite up to equivalence \cut{(renaming variables, reordering terms, copying terms) }for every fixed $q$ and $\ell$, such that we can iterate over all formulas from this class to find a consistent one.

The algorithm we present separates the task of analyzing the background structure and building an {index} structure from the task of finding a consistent hypothesis for a given set of training examples based on that index.
On the one hand, this emphasizes the challenges that arise when learning formulas.
On the other hand, the indexing phase can be completed before we know a single example and we can thus reuse it for similar learning tasks on the same dataset.
Formally, we have an indexing phase in which the learning algorithm $\algo$ has local access to the tree~$\tree$, but not to the training set $\train$, and produces an index $I(\tree)$.
In the learning phase, $\algo$ gets $\train$ and has local access to $\tree$ and $I(\tree)$.
Based on that input, it produces a formula $\phi(x\smid \bar y)$ and a parameter configuration $\bar v$ such that $\model{\tree}{\phi}{x}{\bar v}$ is consistent with $\train$.
We call such an algorithm an \emph{indexing algorithm}, the time needed to build the index \emph{indexing time} and the time for the actual search of a consistent hypothesis based on that index \emph{search time}.
We show the following theorem which implies \cref{thm:main} stated in the introduction.

\begin{theorem}\label{thm:msoLearning}
  Let $q,\ell\in \NN$ be fixed.
  Given a tree $\tree$ and a training set $\train$ that is consistent with an MSO formula $\vartheta(x\smid\bar y)\in\mso[q,\ell+1]$, it is possible to find a formula $\phi(x\smid \bar y)\in\mso[q,\ell+1]$ that is consistent with $\train$ in indexing time $\onot(|\tree|)$ (without access to~$\train$) and search time $\onot(|\train| \log |\tree|)$.
\end{theorem}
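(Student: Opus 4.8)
The plan is to reduce the model-learning problem to that of \emph{placing parameters so that positive and negative examples receive different MSO $q$-types}, and then to make that placement searchable in logarithmic time by precomputing an algebraic index over a balanced decomposition of $\tree$. First I would recall that whether $\tree\vDash\phi(u,\bar v)$ holds for $\phi\in\mso[q,\ell+1]$ depends only on the MSO $q$-type $\text{tp}_q(\tree;u,\bar v)$ of the marked tuple, that there are finitely many such types for fixed $q,\ell$, and that each formula in $\mso[q,\ell+1]$ corresponds up to equivalence to a union of types. Hence a consistent hypothesis with parameters $\bar v$ exists if and only if the set of types $\{\text{tp}_q(\tree;u,\bar v)\mid (u,+)\in\train\}$ is disjoint from $\{\text{tp}_q(\tree;u,\bar v)\mid (u,-)\in\train\}$; given such a separating $\bar v$ one simply lets $\phi$ accept exactly the positive types. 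Since $\train$ is consistent with $\vartheta$, the parameters witnessing $\vartheta$ already separate the examples, so a separating $\bar v$ exists. It therefore suffices for the search phase to find any separating tuple and read off $\phi$.

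For the index I would translate the $q$-type map into a deterministic bottom-up tree automaton $\dta$ over the extended alphabet $\alphabetInnerMonoidStrings$, whose second component marks each node as a positive, negative, or non-example position ($P/N/?$) and whose third component records which of the $y_i$ sit on it. Crucially, placing the instance or a parameter is then a \emph{single-node relabeling}, so I can build the index from $\tree$ alone and apply the at most $\ell+1$ markings as updates in the search phase. To get logarithmic recomputation I would follow the MSO-under-updates technique of \cite{BalminPV04} via a balanced heavy-path decomposition $\decomp$ of $\tree$, splitting the run of $\dta$ into two layers: along a single heavy path the computation is captured by a word homomorphism $\freeToInnerMorphism$ into an inner monoid $\innermonoid$ over $\alphabetInnerMonoidStrings$, while the way heavy paths attach and aggregate toward the root is captured by a homomorphism $\freeToOuterMorphism$ into an outer monoid $\outermonoid$. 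Storing balanced segment-product structures for $\innermonoid$ and $\outermonoid$, built in $\onot(|\tree|)$, lets me recompute $\text{tp}_q(\tree;u,\bar v)$ after any such relabeling in $\onot(\log|\tree|)$, since every root-to-node path crosses only $\onot(\log|\tree|)$ heavy paths, each contributing one monoid factor.

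The search then reuses the string result of \cite{DBLP:conf/alt/GroheLR17}. On each heavy path the marked path is literally a word over $\alphabetInnerMonoidStrings$, and the type of $(u,\bar v)$ factors through the $\onot(\log|\tree|)$ heavy paths above $u$ composed in $\outermonoid$, so placing the $\ell$ parameters reduces to parameter placements on $\onot(\log|\tree|)$ heavy-path words, which the string algorithm can handle. I would argue, by a sufficiency argument at the level of monoid elements analogous to the quantifier-free case of \cref{th:sub_qf_binary}, that only a number of candidate placements bounded in terms of $|\train|$ (those distinguished by the inner and outer products near the examples) need be examined; each candidate is tested by recomputing the types of all examples through the index, for a total search cost of $\onot(|\train|\log|\tree|)$. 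Consistency of $\train$ with $\vartheta$ guarantees that some candidate separates the examples, at which point $\phi$ is read off from the realized positive types.

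The main obstacle I expect is correctly coupling the inner and outer monoids: a parameter placed on a low heavy path changes the type that is propagated upward through $\outermonoid$, so the two layers cannot be searched independently, and the homomorphisms $\freeToInnerMorphism,\freeToOuterMorphism$ must be set up so that a single marking updates both layers consistently in $\onot(\log|\tree|)$. The second delicate point is keeping the search time at $\onot(|\train|\log|\tree|)$ rather than letting it scale with $|\tree|$: this hinges on the monoid-level sufficiency argument showing that the set of parameter placements worth trying is controlled by the examples, and on the index delivering each candidate's effect on every example in $\onot(\log|\tree|)$.
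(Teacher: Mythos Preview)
Your reduction of model learning to ``find a separating parameter tuple $\bar v$ and read off $\phi$ from the realised types'' is fine and matches in spirit the paper's reduction to parameter learning (the paper iterates over the finitely many formulas, you iterate over the finitely many type-unions; these are equivalent). The heavy-path decomposition, the DTA-to-DFA-along-heavy-paths translation, and the use of an inner transition monoid $\innermonoid$ are all as in the paper. Where you diverge is the search strategy, and that is where the gap lies.

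You propose to bound the candidate parameter placements by a quantity depending only on $|\train|$, via ``a sufficiency argument at the level of monoid elements analogous to the quantifier-free case of \cref{th:sub_qf_binary}'', and then test each candidate with the index. This is the step that does not go through. The quantifier-free sufficiency argument worked because a quantifier-free formula can only compare $x$ and each $y_i$ through atomic relations, so two parameter positions with the same relative position to every example are interchangeable. For MSO this fails: a parameter can influence the type of an example through arbitrarily long paths and arbitrary structural features, so there is no reason the set of relevant placements is controlled by the examples alone. You would need, for every monoid element of $\innermonoid$, a representative placement reachable in $\onot(\log|\tree|)$ time from the index, and you do not say how to get one; and even then the combinatorics of distributing $\ell$ parameters over $\onot(\log|\tree|)$ heavy paths, each with many monoid-equivalence classes of positions, does not obviously collapse to $\onot(|\train|)$ candidates.

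The paper avoids enumeration entirely by a different outer monoid: it takes $\outermonoid = 2^{\innermonoid}$, the \emph{powerset} of the inner monoid, with elementwise product. The morphism $\freeToOuterMorphism$ maps each node to the set of all $\innermonoid$-elements obtainable by placing \emph{any} subset of the parameters there (and any reachable element for the cut-off subtree). Thus a single evaluation of $\freeToOuterMorphism$ on the root heavy path already answers ``does some placement exist?''. The index stores Simon factorization trees over $\outermonoid$; the search phase first \emph{updates} these trees with the $P/N$ marks from $\train$ (this is the $\onot(|\train|\log|\tree|)$ cost), and then a top-down \emph{tracing} pass through the factorization trees picks, at each node, inner-monoid witnesses $m_1,m_2$ with $m_1m_2=m$ and follows only those children whose witness carries nonempty parameter content. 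This navigates directly to concrete positions for $\bar v$ in $\onot(\ell\log|\tree|)$, with no candidate enumeration at all. Your outer monoid, by contrast, is just the aggregation layer of \cite{BalminPV04} for fast re-evaluation after a \emph{given} relabeling; it does not encode the existential over placements, which is precisely the work the powerset construction does.
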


Technically, we only consider binary trees.
The theorem directly extends to unranked trees as there exist MSO interpretations between ranked and unranked trees increasing the quantifier-rank of the a consistent formula by at most $1$.

Instead of solving the model learning problem from \cref{thm:msoLearning} directly, we give an algorithm for the corresponding parameter learning problem in \cref{thm:msoParameterLearning}.
Together with an exhaustive search over all (finitely many) semantically different MSO formulas with at most $\ell+1$ free variables and quantifier rank up to $q$, this proves the theorem.
In the parameter learning problem the learning algorithm has access to $\varphi$ and has to find a parameter setting $\bar v$ such that $\model{\tree}{\varphi}{x}{\bar v}$ is consistent with $\train$.
For a fixed formula $\phi$, the presented algorithm creates an index structure of size $\onot(|\tree|)$ in time $\onot(|\tree|)$.
In time $\onot(\log(|\tree|))$, it then computes consistent parameters $\bar v$ for $\phi$ or outputs that there is no such $\bar v$.

For the parameter learning problem there is a linear lower bound \cite{grorit17} for learning on strings which trivially extends to trees.
The running time of the presented algorithm therefore differs from the optimal one only by a logarithmic factor. 
The number of formulas tested in the exhaustive search is non-elementary in $q$ and $\ell$ such that the presented algorithm is of primarily theoretical interest.

\begin{theorem}\label{thm:msoParameterLearning}
  Let $q,\ell\in \NN$ be fixed.
  Given a tree $\tree$, an MSO formula $\phi(x\smid \bar y) \in\mso[q,\ell+1]$ and a training set $\train$ consistent with $\phi$.
  Then a set of parameters $\bar v \in V(\tree)^\ell$ such that $\model{\tree}{\phi}{x}{\bar v}$ is consistent with $\train$ can be found in indexing time $\onot(|\tree|)$ and search time $\onot(\log |\tree|\cdot |\train|)$ \end{theorem}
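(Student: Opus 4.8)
The plan is to compile the fixed formula together with the training labels into a single deterministic bottom-up tree automaton and to reduce parameter learning to a search for an accepted marking of $\tree$. Following the correspondence between MSO and tree automata, I would build from $\phi$ a DTA $\dta$ over the extended alphabet $\Sigma\times\{P,N,?\}\times 2^{\{y_1,\dots,y_\ell\}}$: the $\{P,N,?\}$-component records whether a node is a positive example, a negative example, or irrelevant, and the last component marks at which node each parameter $y_i$ is instantiated. The automaton $\dta$ is the one for the MSO sentence $\forall x\,(P(x)\rightarrow\phi(x,\bar y))\wedge\forall x\,(N(x)\rightarrow\neg\phi(x,\bar y))$, which has quantifier rank $q+1$ and free parameter variables $\bar y$; hence its state set $Q$ has constant size for fixed $q,\ell$. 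Once the $\{P,N,?\}$-labels are set according to $\train$, a parameter marking $\bar v$ is accepted by $\dta$ exactly when $\model{\tree}{\phi}{x}{\bar v}$ is consistent with $\train$. Since $\train$ is consistent with $\phi$, at least one such marking exists, and it suffices to find one.

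For the indexing phase I would construct, in time $\onot(|\tree|)$, a balanced hierarchical decomposition of $\tree$ annotated with the transition profiles of $\dta$ on every block, following the technique for evaluating MSO under updates of \cite{BalminPV04}. Each block stores a constant-size profile --- a function from the states entering the block's boundary to the state $\dta$ computes at its root --- and the decomposition has depth $\onot(\log|\tree|)$. Consequently, changing the label of a single node and recomputing the state of $\dta$ at the root touches only the $\onot(\log|\tree|)$ profiles along that node's root path. The index is built with all nodes labeled $?$ and no parameters, so it is independent of $\train$. In the search phase the algorithm first installs the $P/N$ labels of the $|\train|$ example nodes; this is $|\train|$ single-node updates costing $\onot(\log|\tree|)$ each, hence $\onot(|\train|\log|\tree|)$ in total, after which any candidate parameter marking at a constant number of nodes can be evaluated for acceptance in $\onot(\log|\tree|)$.

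The remaining task --- locating an accepted parameter marking without trying all $|\tree|^\ell$ of them --- I would solve by the composition (Feferman--Vaught) method together with the boundedness of $Q$. Whether a marking is accepted depends only on the $\dta$-types contributed by the marked nodes and on their qualitative arrangement (the induced ancestor/LCA pattern among $\bar v$ and the example nodes), and for fixed $q,\ell$ there are only $\onot(1)$ such combined \emph{shapes}. I would therefore enumerate the constantly many shapes; test each for acceptance using the index in $\onot(\log|\tree|)$; and, for the first accepting shape, recover witnessing nodes $\bar v$ by descending the decomposition, which can be augmented to record the types realized inside each block, in $\onot(\log|\tree|)$ per parameter. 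Adding the $\onot(|\train|\log|\tree|)$ for installing the labels yields the claimed search time.

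The main obstacle is the last step: proving a composition lemma on trees that certifies a \emph{constant}-size family of shapes suffices to decide acceptance of $\dta$, and that every shape is both testable and realizable through the balanced index. On strings (see \cite{DBLP:conf/alt/GroheLR17}) the parameters split the word into intervals and the relevant type is just the state sequence at interval boundaries; on trees the branching and the ancestor relation force one to track how a parameter's type propagates through the $\onot(|\train|)$ branching nodes of the LCA-closure of $\train$ and to exploit the idempotent/absorbing structure of the transition monoid of $\dta$ so that the ``generic'' regions of the tree collapse to boundedly many types independent of $|\tree|$. Making this type analysis interface cleanly with the dynamic profiles of \cite{BalminPV04} is precisely the non-trivial combination anticipated in the introduction.
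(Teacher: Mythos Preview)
Your framework is sound through the second paragraph: compiling $\psi$ into a DTA over $\Sigma\times\{P,N,?\}\times 2^{\{y_1,\dots,y_\ell\}}$, building a logarithmic-depth decomposition with transition profiles, and installing the $P/N$ labels as $|\train|$ single-node updates of cost $\onot(\log|\tree|)$ each --- all of this matches the paper (which uses heavy paths together with Simon factorization trees rather than a generic balanced decomposition, but that difference is inessential here).

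The gap is precisely where you place it yourself. Your third paragraph asserts that there are $\onot(1)$ ``shapes'' determined by the LCA pattern among $\bar v$ \emph{and the example nodes}; but there are $|\train|$ example nodes, so that pattern is not of constant size, and you retract the claim in the last paragraph anyway. More importantly, even after the training labels are installed, testing whether an \emph{abstract} shape is realizable --- whether there exist positions in $\tree$ matching the shape so that $\dta$ accepts --- is not a single profile lookup; it is a search over the tree. Your remark that the decomposition ``can be augmented to record the types realized inside each block'' is the right instinct, but you never say what to record or why it stays constant-size when the parameter marking is allowed to vary, and Feferman--Vaught by itself does not give that.

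The paper closes this gap not by enumerating shapes but by a second, \emph{powerset} monoid $\outermonoid=2^{\innermonoid}$, where $\innermonoid$ is the transition monoid of the DFA that simulates $\dta$ along heavy paths. The element $\hat m\in\outermonoid$ attached to a block is the set of all $m\in\innermonoid$ realizable by \emph{some} placement of a subset of the parameters inside that block (and its dependent subtree); multiplication is elementwise set product. This folds the existential quantifier over parameter positions into the algebra: after the $|\train|$ updates, the root element $\hat m_r$ meets the final set $F$ iff a consistent $\bar v$ exists, and an explicit $\bar v$ is recovered by picking any $m\in\hat m_r\cap F$ and tracing it down the factorization trees, at each step splitting $m$ into factors present in the children's $\outermonoid$-labels and sending each parameter to the side whose factor carries it (the paper shows a well-defined map $p\colon\innermonoid\to 2^{\{y_1,\dots,y_\ell\}}$ on productive elements). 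The ``augmented types'' you were reaching for are exactly the elements of $\outermonoid$, and $\outermonoid$ has constant size because $\innermonoid$ does; no global shape enumeration is needed.
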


Although $\phi$ is part of the input in \cref{thm:msoParameterLearning} its quantifier rank is explicitly bounded in order to achieve the desired runtime bound.
We use automata based techniques in our proof and thus need to restrict the size of a DFA that is equivalent to $\phi$.
In general the size of that DFA is bounded by a tower of twos linear in the number of quantifier alternations in $\phi$.
By bounding the quantifier rank of $\phi$ we replace this non-elementary dependence on the length of the input by a constant.

\subsection{Decomposing trees into strings}\label{subsec:treeDecomp}

In the proof of \cref{thm:msoParameterLearning} we apply techniques based on automata theory and monoids where the latter require the input to consist of strings instead of trees.
We therefore decompose our background tree into a set of strings by \emph{heavy path decompositions} introduced by Harel et al. \cite{harel1984fast}.
We start with some additional notation.
Given a tree $\tree = (V(\tree),E(\tree),\order)$
and a partition $P_1,\dots,P_p$ of $V(\tree)$. Let $\tree[P_i]$ be the induced substructure of $P_i$ on $\tree$ defined by restricting $V(\tree), E(\tree)$ and $\order$ to the elements of $P_i$. 
We consider only induced substructures where $\tree[P_i]$ is a string which implies that $\order$ restricted to $P_i$ is a total order.
Let furthermore $\tree[u]_\order$ be defined as $\tree[\braces{w \mid u\order w}]$, that is, we restrict $\tree$ to the induced subtree rooted at $u$.

\begin{figure}[t]
  \begin{center}
    \vspace{-2ex}

\colorlet{pathblue}{rgb:cyan,2;black,4;blue,3}
\colorlet{pathorange}{rgb:orange,3;black,0;red,2;yellow,1}
\begin{forest}
   for tree={l sep=1em, l=1em,s sep=1.1em, math content, inner sep=0.25em,draw,circle,thick,edge=thick},
   [,tikz={\node[left=0.1cm] at () {$u_{r}$};}
    [,edge=dotted,tikz={
      \node [draw, inner sep=0.5em, fit=()(!11)(!1ll),pathblue,rounded corners,thick]{};
      \node [right=0.1cm,yshift=0.05cm,pathblue] at () {$u'$};
      \node [below left=-0.1cm and 0.25cm,pathblue] at () {$P_2$};
      },for tree={pathblue},for descendants={edge={pathblue}}
      [
        [,edge=dotted]
        [[][,edge=dotted]]
      ]
      [,edge=dotted]
    ]
    [,tikz={\node [below left=0.3cm and -0.1cm,pathorange] at () {$w$};},pathorange,name=a
      [,pathorange,edge=pathorange,name=b
        [,edge=dotted,name=c]
        [,pathorange,edge=pathorange,
          [[][,edge=dotted]]
          [,edge=dotted[,name=e][,edge=dotted,name=f]]
        ]
      ]
      [,edge=dotted,name=g[][,edge=dotted,name=d]]
    ]
   ]
   \draw[pathorange,thick] \convexpath{a,g,d,f,e,c,b}{7pt};
\end{forest}

     \vspace{-2ex}
  \end{center}
  \caption{Heavy path decomposition of a tree. Solid lines indicate the different heavy paths, the cut-off subtree of $u_r$ is shown in blue and the dependent subtree of $w$ is shown by the orange shape.}\label{fig:heavyPathDecomposition}
\end{figure}
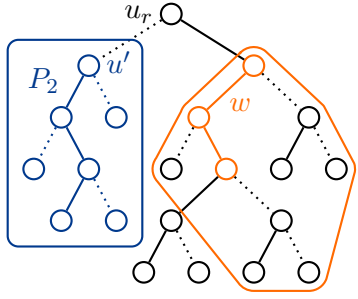

In the following we present the definition of the heavy path decompositions as introduced in \cite{harel1984fast}.
We say that a node $u\in V(\tree)$ is at least as \emph{heavy} as $v\in V(\tree)$ \cut{($u \geq_h v$)} if $|\{ w \mid u\order w\}| \geq |\{ w \mid v\order w\}|$.
That means that $u$ is heavier than $v$ if there are more nodes below $a$ than there are below $b$.
The \emph{heavy path} of $\tree$ at $u\in V(\tree)$ is a path $(u_0,u_1,\dots,u_\ell)$ such that $u_0=u$, the node $u_\ell$ is a leaf and $u_i$ is the leftmost child of $u_{i-1}$ that is at least as heavy as all other children of $u_{i-1}$ for every $i\leq \ell$.
Let $\tree$ be a binary tree and $P= (u_0,\dots,u_\ell)$ a heavy path.
The \emph{cut-off subtree} at a node $u_i\in P$ with a child $u'\not\in P$ is $\tree[u']_\order$. 
The \emph{dependent subtree} of a substring $w\subseteq P$ consists of $w$ and the union of the cut-off subtrees for every $u\in w$.
This differs from $\tree[u_0]_\order$ where $u_0$ is the lowest node of $w$ in $\tree$ if $w$ does not contain a leaf of $\tree$. 
The heavy path decomposition $\decomp{}(\tree) = \braces{P_1,\dots,P_p}$ of $\tree$ is constructed by first computing $P_1$ as the heavy path of $\tree$ at its root and then recursively computing the heavy paths of the cut-off subtrees for each $u\in P_1$.

\cref{fig:heavyPathDecomposition} illustrates the heavy path decomposition as well as cut-off and dependent subtrees.
From a simple counting argument we get the following lemma on heavy path decompositions.

\begin{lemma}[\cite{BalminPV04}]\label{lem:logarithmicHeavyPathsToTheRoot}
Every path from the root of $\tree$ to a node $u\in V(\tree)$ intersects with at most $\log |\tree|$ different heavy paths from $\decomp(\tree)$.
\end{lemma}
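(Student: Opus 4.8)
The plan is to use the standard subtree-halving property of heavy path decompositions. First I would fix the node $u$ and consider the unique path $\pi$ from the root $u_r$ of $\tree$ to $u$. Decompose $\pi$ into maximal segments, each of which lies entirely within a single heavy path of $\decomp(\tree)$; the number of distinct heavy paths met by $\pi$ equals the number of such segments. Whenever $\pi$ passes from one segment to the next it does so along a \emph{light edge}: an edge from a node $p$ to a child $c$ that does not continue $p$'s heavy path, i.e.\ $c$ is not the leftmost child of $p$ of maximal subtree size.

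The key step is the halving claim: if $c$ is a light child of $p$, then $|\tree[c]_\order| < \tfrac12 |\tree[p]_\order|$. I would prove this directly from the definition of the heavy path. Since $\tree$ is binary, $p$ has at most two children; let $h$ be the child continuing the heavy path of $p$. By the choice of $h$ as the leftmost child at least as heavy as all others, we have $|\tree[h]_\order| \geq |\tree[c]_\order|$. As $\tree[p]_\order$ is the disjoint union of $\{p\}$, $\tree[h]_\order$ and $\tree[c]_\order$, this gives $|\tree[p]_\order| \geq 1 + |\tree[h]_\order| + |\tree[c]_\order| > 2\,|\tree[c]_\order|$, which is exactly the claim.

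With the halving claim in hand the counting is immediate. Let $k$ be the number of light edges traversed by $\pi$, so that $\pi$ meets $k+1$ heavy paths. Tracking the subtree sizes at the successive endpoints of these light edges, each such step strictly more than halves the size; starting from $|\tree[u_r]_\order| = |\tree|$ and remaining at least $1$, this forces $|\tree| > 2^{k}$, hence $k < \log_2 |\tree|$, and therefore $\pi$ intersects at most $\log |\tree|$ heavy paths.

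The only subtlety, and the main thing to get right, is the bookkeeping of the segment-to-heavy-path correspondence together with the off-by-one between $k+1$ and the claimed bound $\log|\tree|$. This is absorbed by the \emph{strict} inequality in the halving claim, which yields a spare factor of two; the remainder is a direct adaptation of the Sleator--Tarjan light-edge count to the ancestor-subtree measure $|\tree[\cdot]_\order|$ used here.
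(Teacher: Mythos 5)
The paper gives no proof of this lemma at all (it is imported from Balmin et al.\ with the remark that it follows ``from a simple counting argument''), and your proof is precisely that standard argument: every transition between heavy paths on a root-to-$u$ path uses a light edge, each light edge more than halves $|\tree[\cdot]_\order|$, and hence there are fewer than $\log_2|\tree|$ such transitions; the halving claim and the observation that a downward path never re-enters a heavy path are both correct. The one inaccuracy is your closing remark that the strict inequality ``absorbs'' the off-by-one: it gives $k+1\le\lfloor\log_2|\tree|\rfloor+1$, which can exceed $\log_2|\tree|$ (for $|\tree|=3$ the path from the root to the light child already meets two heavy paths), but this is a blemish of the lemma's literal statement rather than of your argument, and the $\onot(\log|\tree|)$ bound that the paper actually uses downstream is unaffected.
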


\subsection{Simulating tree automata by string automata}\label{subsec:DfaConstruction}

We now describe how we can use path decompositions such as $\decomp{}(\tree)$ and deterministic finite automata (DFAs) to simulate a deterministic bottom-up tree automaton (DTA) $\dta$ on $\tree$ as shown in \cite{BalminPV04}.
Let $\tree$ be a tree, $\decomp{}(\tree)\! = \!\{P_1,\dots,P_p\}$ its heavy path decomposition and $\dta$ a DTA with statespace $Q$.
Let $\rho$ be the run of $\dta$ on $\tree$ and $Q' = Q \,\dot\cup\, \braces{q_0}$.
We define a DFA $\dfa$ and a labeling function $\funcAlphabetExtensionForDfa\colon V(\tree) \rightarrow \Sigma\times Q' \times \lr$ extending the labels of $V(\tree)$ by annotations about state and direction of the cut-off subtree.
These labels, given by $\funcAlphabetExtensionForDfa{}$, allow us simulate the tree automaton $\dta$ on $T$ by running a DFA $\dfa$ on $\decomp(\tree)$.

\begin{lemma}\label{lem:sameStatesDtaDfa}
  Let $\rho$ be the run of the DTA $\dta$ on $\tree$ and $\tree'$ the tree we get from applying $\funcAlphabetExtensionForDfa{}$ to every node of $\tree$.
  Then there is a DFA $\dfa$ such that for every $u\in P_i$, $\dfa$ is in state $q$ after reading $\tree'[P_i]$ up to position $u$, if and only if $\rho(u)=q$.
\end{lemma}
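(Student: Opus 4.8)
The plan is to define both the labeling $\funcAlphabetExtensionForDfa$ and the automaton $\dfa$ so that running $\dfa$ on a single heavy path --- read from its leaf towards its top --- reproduces the bottom-up computation of $\dta$ along that path. The key observation is that in a binary tree every node $u$ on a heavy path $P_i$ has its continuation on the path as one child (the \emph{heavy} child, whose state is computed along $P_i$ itself) and at most one further child off the path (the \emph{light} child, the root of a cut-off subtree). Thus, to recover $\rho(u)$ from the state of its heavy child, the DFA only needs the label of $u$, the state of the light child, and the side on which the light child sits; all of this can be baked into the extended label $\funcAlphabetExtensionForDfa(u)$. Crucially, the resulting $\dfa$ will depend only on $\dta$ and not on $\tree$, so that the same DFA can be run on every $\tree'[P_i]$.

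Concretely, I would write the DTA transition uniformly as $\delta\colon \Sigma\times Q'\times Q'\to Q$, using the extra state $q_0$ as a marker for an absent child, so that $\rho(u)=\delta(a,\rho(c_L),\rho(c_R))$ where $a$ is the label of $u$ and $c_L,c_R$ are its left and right children (with the convention $\rho=q_0$ for a missing child); in particular leaves satisfy $\rho(u)=\delta(a,q_0,q_0)$. I then define $\funcAlphabetExtensionForDfa(u)=(a,s(u),d(u))\in\Sigma\times Q'\times\lr$, where $s(u)=\rho(u')$ is the state of the off-path child $u'$ of $u$ (and $s(u)=q_0$ when $u$ has no such child), and $d(u)\in\lr$ records whether the heavy child is the left or the right child. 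The DFA $\dfa$ has state set $Q'$, initial state $q_0$, and on reading $(a,s,d)$ from a state $p$ it outputs $\delta(a,p,s)$ if $d=\leftChild$ and $\delta(a,s,p)$ if $d=\rightChild$; that is, it treats its current state $p$ as the heavy-child state and $s$ as the light-child state and inserts them into $\delta$ according to the side recorded by $d$.

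Correctness then follows by induction along $P_i=(u_0,\dots,u_\ell)$ read from the leaf $u_\ell$ upward. For the base case, the DFA reads $u_\ell$ from its initial state $q_0$; since $u_\ell$ is a leaf we have $s(u_\ell)=q_0$, so $\dfa$ enters $\delta(a,q_0,q_0)=\rho(u_\ell)$ irrespective of $d(u_\ell)$. For the inductive step, assume that after reading up to $u_{i+1}$ the DFA is in state $\rho(u_{i+1})$. Reading $u_i=(a,s(u_i),d(u_i))$ then applies $\delta$ to $a$ with $\rho(u_{i+1})$ in the heavy slot and $s(u_i)=\rho(u')$ in the light slot, placed in the correct left/right order by $d(u_i)$; since $u_{i+1}$ is the heavy child of $u_i$ and $u'$ its only possible further child, these are exactly the (at most two) children of $u_i$, so the value is precisely $\rho(u_i)$. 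Hence after reading up to any $u\in P_i$ the DFA is in state $\rho(u)$, which is the claim.

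The only genuinely delicate points are bookkeeping rather than conceptual: one must ensure that the heavy-child and light-child states land in the correct arguments of $\delta$ (this is exactly the role of the direction component $d(u)$), and that the boundary situations --- a leaf, or an internal path node having only its heavy child --- are handled uniformly, which the marker $q_0$ achieves. One should also keep the reading orientation consistent with the bottom-up automaton, so that ``reading $\tree'[P_i]$ up to $u$'' is understood as processing the path from its leaf up to and including $u$.
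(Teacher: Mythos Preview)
Your proposal is correct and essentially identical to the paper's proof: both define $\funcAlphabetExtensionForDfa(u)$ to carry the label, the state $\rho(u')$ of the off-path child, and a direction bit, and both define $\dfa$ to plug its current state and the recorded state into the two child slots of $\delta$ according to that bit. The only cosmetic differences are that the paper records the side of the \emph{off-path} child rather than the heavy child (an equivalent convention), splits $\delta$ into separate leaf and inner transitions rather than using a uniform $q_0$-for-missing-child convention, and states the correctness in one sentence rather than spelling out the induction as you do.
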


\begin{inlineproof}
  To match the evaluation order of bottom-up tree automata, the constructed DFA $\dfa$ reads paths $\tree'[P_i]$ in reversed order, that is, starting from the leaves and proceeding towards the root of~$\tree$.
  The labeling function $\funcAlphabetExtensionForDfa{}$ uses the run $\rho$ of $\dta$ on $\tree$:
  For a leaf $u$ with label $a$ in $\tree$ we have $\funcAlphabetExtensionForDfa(u) = (a,q_0,\leftChild{})$.
  For an inner node $u\in P_i$ with label $a$ and children $u'\in P_i$ and $u''\not\in P_i$ where $u''$ is a right child of $u$ we have $\funcAlphabetExtensionForDfa{}(u) = (a,\rho(u''),\rightChild{})$ and correspondingly $(a,\rho(u''),\leftChild{})$ if $u''$ was a left child of $u$.
  This means that the function $\funcAlphabetExtensionForDfa{}$ stores the state of the DTA $\dta$ on the cut-off subtree at $u$ in the label of $u$.
  \cref{fig:dfaDtaSimulation} shows an example for $\funcAlphabetExtensionForDfa{}$ given a tree $\tree$ and a run $\rho$ of a DTA on $\tree$.

  \begin{figure}[t]
    \begin{center}
      \vspace{-2ex}

\begin{forest}
  for tree={l sep=1.0em, l=1em, math content, inner sep=0.25em,minimum height=1.5em},
  [,phantom,s sep=5em
  [q_4,tikz={\node [above left=-0.3em and 2em]at () {$\rho(T):$};},tier=oben
    [q_1]
    [q_4
      [q_3[q_2][q_3]]
      [q_1[q_1][q_1,tier=unten]]
    ]
  ]
  [{(c,q_1,L)},tikz={\node [above left=-0.3em and 4.5em]at () {$f_\rho(T):$};},tier=oben
    [{(a,q_0,L)},edge={dotted,thick}]
    [{(b,q_1,R)}
      [{(a,q_3,R)}
        [{(b,q_0,L)}]
        [{(c,q_0,L)},edge={dotted,thick}]
      ]
      [{(a,q_1,R)},edge={dotted,thick}
        [{(a,q_0,L)}]
        [{(a,q_0,L)},edge={dotted,thick},tier=unten]
      ]
    ]
  ]
  ]
\end{forest}
       \vspace{-2ex}
    \end{center}
    \caption{Applying the label transformation $\funcAlphabetExtensionForDfa{}$ to $\decomp{}(\tree)$ for some tree $\tree$}
    \label{fig:dfaDtaSimulation}
  \end{figure}

  Let $\dta = (Q,\Sigma, \delta, F)$ be a DTA and $\decomp(\tree)= \!\{P_1,\dots,P_p\}$ the heavy path decomposition of $\tree$.
  For the case distinction we assume that the transition relation $\delta$ of $\dta$ is split into $\delta_0\colon \Sigma\rightarrow Q$ for leaf nodes and $\delta_2\colon \Sigma\times Q\times Q \rightarrow Q$ for inner nodes.
  We now define the DFA $\dfa = (Q',\Sigma\times Q' \times \lr,\delta', q_0, F)$ where $Q' = Q \,\dot\cup\, \braces{q_0}$ with $Q,\Sigma$ and $F$ from $\dta$.
  The transition relation $\delta'$ of the DFA $\dfa$ is defined as follows:
  \begin{alignat*}{2}
    \delta'(q_0,(a,q_0,d)) &= \delta_0(a) &&~\text{ for all } a\in\Sigma, d\in \{\leftChild,\rightChild\},\\
    \delta'(p,(a,q,\leftChild)) &= \delta_2(a,q,p) \qquad&&~\text{ for all } a\in \Sigma,~ p,q\in Q,\\
    \qquad\delta'(p,(a,q,\rightChild)) &= \delta_2(a,p,q) &&~\text{ for all } a\in \Sigma,~ p,q\in Q.
  \end{alignat*}

  Lemma \ref{lem:sameStatesDtaDfa} holds as at every position $u\in P_i$ the DFA $\dfa$ has access to the same information as $\dta$ through the extended alphabet and thus can assign the same state.
\end{inlineproof}

Let $<_\hp$ be a partial order on the heavy paths of $\decomp{}(\tree)$ with $P_j<_\hp P_i$ if $P_j$ is part of the dependent subtree of $P_i$.
This way $<_\hp$ encodes dependencies between the heavy paths.
For every node $u\in P_i$ with cut-off child $u'\in P_j$, the state $\rho(u')$ only depends on $P_j$ with $P_j <_\hp P_i$.
Hence, by using \cref{lem:sameStatesDtaDfa} we can recursively compute $\rho$ using $\dfa$ on $\decomp{}(\tree)$.
We say that $\dfa$ accepts a tree $\tree$ if it accepts the string $\tree[P_1]$ where $P_1\in\decomp{}(\tree)$ contains the root of $\tree$.

\begin{corollary}\label{cor:dfaEquivalentToDta}
  When evaluating $\dfa$ on $\decomp{}(\tree)$ in an order consistent with $<_\hp$ for every inner node $u\in V(\tree)$ the state $\rho(u')$ of $\dta$ on the cut-off subtree $\tree[u']_\leq$ at $u$ is known when running $\dfa$ on the heavy path $P_i$ with $u\in P_i$.
  Hence we can compute $\rho$ on-the-fly using $\dfa$ on $\decomp{}(\tree)$.
\end{corollary}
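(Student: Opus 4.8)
The plan is to show that the evaluation can be organized as a single topological sweep through the heavy paths, so that whenever we process a path $P_i$, every state needed to label its nodes via $\funcAlphabetExtensionForDfa{}$ has already been computed. The whole statement is essentially a bookkeeping consequence of \cref{lem:sameStatesDtaDfa} together with the dependency structure encoded by $<_\hp$, so I do not expect to prove anything genuinely new here; rather I want to assemble the already-established pieces into the on-the-fly claim.

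First I would recall that the labeling $\funcAlphabetExtensionForDfa{}$ of an inner node $u\in P_i$ depends only on $\rho(u'')$, where $u''\notin P_i$ is the cut-off child of $u$, and that $u''$ is the root of the cut-off subtree $\tree[u'']_\leq$ lying in some heavy path $P_j$ with $P_j<_\hp P_i$. By the definition of $<_\hp$, the state $\rho(u'')$ is determined entirely by the dependent subtree of $P_j$, i.e.\ by paths $P_k$ with $P_k \le_\hp P_j <_\hp P_i$. Since $<_\hp$ is a partial order (it is the transitive, acyclic ``contained in the dependent subtree'' relation on finitely many paths), it admits a linear extension; I would fix any evaluation order consistent with $<_\hp$ and proceed by induction along this order.

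The inductive step is the heart of the argument: when we reach $P_i$, every $P_j$ with $P_j<_\hp P_i$ has already been processed, so by the induction hypothesis the DFA $\dfa$ has been run on $\tree'[P_j]$ and, by \cref{lem:sameStatesDtaDfa}, assigns to the root of each cut-off subtree hanging off $P_i$ exactly the DTA state $\rho(u'')$. Consequently, for each inner node $u\in P_i$ the label $\funcAlphabetExtensionForDfa{}(u) = (a,\rho(u''),d)$ is well-defined and available before $\dfa$ reads $u$. Running $\dfa$ on the reversed string $\tree'[P_i]$ then reproduces, by \cref{lem:sameStatesDtaDfa}, the value $\rho(u)$ for every $u\in P_i$; in particular it computes $\rho(u')$ on the cut-off subtree at any node that will later be needed by a path $P_{i'}$ with $P_i<_\hp P_{i'}$. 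This simultaneously discharges the induction and yields the corollary, since applying it to $P_1$ (the path containing the root) gives $\rho$ on the whole tree.

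The main obstacle, such as it is, lies not in any deep step but in making the dependency claim airtight: one must verify that the labeling of $P_i$ truly requires \emph{only} states from strictly smaller paths, with no circular reference. I would argue this by noting that a cut-off child $u''$ of a node on $P_i$ belongs to the dependent subtree of its own heavy path $P_j$, which is disjoint from $P_i$ and satisfies $P_j<_\hp P_i$ by construction of the decomposition; thus no node on $P_i$ ever depends on a state that is only produced while processing $P_i$ itself. Given this, the topological sweep is well-founded and the on-the-fly computation of $\rho$ follows.
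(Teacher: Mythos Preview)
Your proposal is correct and follows essentially the same approach as the paper: observe that $\funcAlphabetExtensionForDfa{}(u)$ depends only on $\rho(u'')$ for the cut-off child $u''\in P_j$ with $P_j<_\hp P_i$, then process the heavy paths in an order consistent with $<_\hp$ and invoke \cref{lem:sameStatesDtaDfa}. The paper's proof is terser and leaves the induction implicit, whereas you spell out the topological sweep and the well-foundedness check, but the argument is the same.
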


This means that whenever we evaluate $\dfa$ on some $P_i$, all the labels $\funcAlphabetExtensionForDfa(u)$ in $P_i$ can be computed by running $\dfa$ on the paths $P_j$ with $P_j <_\hp P_i$ and incorporating this information in $\funcAlphabetExtensionForDfa{}$.

\begin{inlineproof}
  For every node $u\in V(\tree)$, the labeling function $\funcAlphabetExtensionForDfa{}$ depends only on the previous label, the state $\rho(u')$ where $u'$ is the root of the cut-off subtree at $u$ and whether $u'$ is a left or right child of $u$.
  For every path consisting just of a leaf $u$ $\dfa$ assigns the same state as $\dta$ since the state only depends on the label of $u$ in $\tree$.
  For $u\in P_i$ and $u'\in P_j$ where $u'$ is a child of $u$ and $i\neq j$, that is, $u'$ is the root of the cut-off subtree, we have that $P_j <_\hp P_i$.
  Therefore, when evaluating the heavy paths in an order consistent with $<_\hp$ by \cref{lem:sameStatesDtaDfa} $\rho(u')$ is known by the time $P_i$ is evaluated such that we can compute $\rho$ on-the-fly using $\dfa$ on $\decomp(\tree)$.
\end{inlineproof}

\subsection{Monoids and factorization trees}

A monoid $\mathcal M = \{ \monoid, \monoidmult, 1_\monoid\}$ consists of a set of elements $\monoid$ and an associative multiplication operator $\monoidmult$ with neutral element $1_\monoid$.
We usually refer to the monoid $\mathcal M$ by its set of elements $\monoid$ and write $m_1m_2$ or $m_1\cdot m_2$ for $m_1 \monoidmult m_2$.
A monoid morphism $h\colon \monoid \rightarrow \monoid'$ is a function that translates a monoid in a consistent way, meaning that we have $h(m_1\cdot m_2) = h(m_1)\cdot h(m_2)$ and $h(1_\monoid) = 1_{\monoid'}$.

The free monoid $\freemonoid^\Sigma = \{ \Sigma^*, \cdot_\text{free}, \epsilon\}$ consists of all finite strings over $\Sigma$ with concatenation as multiplication and the empty word $\epsilon$ as neutral element.
We write $\Sigma^*$ and $w\in \Sigma^*$ for $\freemonoid{}^\Sigma$ and $m_w \in \freemonoid{}^\Sigma$ in monoid morphisms, that is we again identify monoids with their elements. 
A language $\mathcal L$ over $\Sigma$ is \emph{finitely monoid recognizable} if there is a finite monoid $\monoid$, a monoid morphism $h\colon \freemonoid \rightarrow \monoid$ and a subset $F\subseteq \monoid$ such that $w\in \mathcal L$ if and only if $h(w)\in \acceptingStates$.
A language $\mathcal L$ is finitely monoid recognizable, if and only if it is regular \cite{RabinS59}.
The \emph{transition monoid} $\monoid$ of a DFA $\dfa$ with state space $Q$ is defined as functions $m\colon Q \rightarrow Q$ with composition as multiplication, and the identity as neutral element. 
The corresponding monoid morphism $h_M$ maps a string $w\in \Sigma^*$, that we identify with the corresponding element from $\freemonoid{}^\Sigma$, to the function $m\colon Q\rightarrow Q$ modeling the effect of reading $w$ on the states of $\dfa$.
A monoid element $m$ is \emph{productive} if there are $m',m''$ with $m'\cdot m\cdot m''\in \acceptingStates$. For further details on the equivalence between finite monoids and DFAs see for example \cite{Colcombet11,Bojanczyk12}.

Every path $P_i\in \decomp{}(\tree)$ induces a string $\tree[P_i]$ over $\Sigma$ which we interpret as an element $m\in \freemonoid{}^\Sigma$ of the free monoid over the tree alphabet $\Sigma$. 
In the following we will not explicitly distinguish between the path $P_i$, the corresponding string $\tree[P_i]$ and the monoid element $m$.

A \emph{factorization tree} of a sequence of monoid elements $s=m_1,m_2,\ldots,m_n$ from a monoid $\monoid$ is a tree $F_s$ where each node $u\in V(F_s)$ is labeled by an element $m_u$ from $M$ and the sequence of labels at its leaves is $s$.
For each inner node $u\in V(F_s)$ labeled by $m_u$ with children $u_1,\dots, u_i$ labeled by $m_{u_1},\dots,m_{u_i}$ we have $m_u = m_{u_1}m_{u_2}\!\cdots m_{u_i}$.
We can always choose a balanced binary tree for $F_s$ such that we get a factorization tree of height $\lceil\log |s|\rceil$.
An alternative are \emph{Simon factorization trees}, where each node $u$ is either a leaf, a binary node or satisfies the property that all child nodes $u_1,\dots,u_n$ of $u$ share the same label $m_u$ with $m_u = m_u\cdot m_u$.
We call such an $m_u$ idempotent.
\begin{theorem}[Simon \cite{Simon90}]\label{thm:simon}
  For every sequence $s = [m_1, m_2,\dots, m_n]$ of monoid elements from $M$, there is a Simon factorization tree of height at most $3|M|$. 
  This factorization tree can be computed in time $n\cdot \text{poly}(|M|)$.
\end{theorem}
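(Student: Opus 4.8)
The plan is to prove the existence of a Simon factorization tree of linear height by induction on the algebraic structure of the finite monoid, and to extract the claimed algorithm directly from this constructive induction. First I would replace $M$ by the subsemigroup $S$ generated by the elements actually occurring in $s$ (the neutral element can always be absorbed into a neighbour), and set up an induction on $|S|$. The base case $|S|=1$ is immediate: the unique element $e$ satisfies $e\cdot e = e$, so it is idempotent and the whole block forms a single idempotent-labelled node of height $1\le 3|M|$.

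For the inductive step I would analyse the sequence through Green's relations. The central observation is that the prefix products $p_i = m_1\cdots m_i$ descend monotonically in the $\mathcal{R}$-preorder: since $p_{i+1}=p_i\cdot m_{i+1}$, the right ideal can only shrink, so $p_{i+1}\le_{\mathcal{R}} p_i$. Hence along the sequence the $\mathcal{R}$-class of the prefix can only drop, and it drops at most $|S|$ times. This lets me cut $s$ into maximal blocks on which the prefix $\mathcal{R}$-class is constant, joining the blocks by a bounded number of binary nodes (charged against the $\mathcal{R}$-class descents). Within a single block the prefixes stay $\mathcal{R}$-equivalent, so right multiplication acts group-like on the $\mathcal{H}$-classes of that $\mathcal{R}$-class (the Schützenberger group / Rees structure of the enclosing $\mathcal{J}$-class governs the situation). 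A short Ramsey-type argument inside this group-like region extracts an idempotent $e$ and lets me collect the corresponding factors under a wide node all of whose children carry label $e$, which is exactly the idempotent node the definition permits. Factors built from elements strictly $\mathcal{J}$-below the current class generate a strictly smaller subsemigroup and are handed to the induction hypothesis, contributing only an additive amount to the height.

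The \emph{main obstacle} is controlling the constant so that the height is linear, $\le 3|M|$, rather than the exponential (or non-elementary) bound produced by a naive global Ramsey argument applied to the whole sequence at once. The difficulty is the bookkeeping: each $\mathcal{R}$-descent, each symmetric $\mathcal{L}$-descent, and each idempotent-grouping step must cost only a constant additive height, and these costs must be charged against \emph{distinct} Green's classes so that they telescope to $O(|M|)$. I would organize the recursion so that one level strictly decreases a rank on the monoid — for instance the number of $\mathcal{R}$-classes lying $\mathcal{J}$-above the elements still in play — and verify that the binary splits at block boundaries together with the idempotent node cost at most three units per unit of rank decrease, which summed over a recursion of depth at most $|M|$ yields the bound $3|M|$.

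Finally, for the complexity claim I would observe that the whole construction is effective. The Green's relations, the idempotents, and the Rees/Schützenberger data of $M$ are computable once in $\mathrm{poly}(|M|)$ time; afterwards processing the length-$n$ sequence performs $\mathrm{poly}(|M|)$ work per position (locating the block boundaries where the prefix $\mathcal{R}$-class drops, assigning labels, and recursing on strictly smaller subsemigroups). Since the recursion has depth $O(|M|)$ and the total work across all levels is bounded by the number of positions times $\mathrm{poly}(|M|)$, the factorization tree is produced explicitly in time $n\cdot\mathrm{poly}(|M|)$, as claimed.
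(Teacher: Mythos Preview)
The paper does not supply its own proof of this statement: Theorem~\ref{thm:simon} is quoted from Simon~\cite{Simon90} and used as a black box, so there is nothing in the paper to compare your argument against. Your sketch follows the standard route to the factorization forest theorem via Green's relations --- the monotone descent of prefix products in the $\mathcal{R}$-preorder, the group-like (Schützenberger) structure inside a fixed $\mathcal{J}$-class to manufacture the idempotent nodes, and induction on a strictly smaller subsemigroup for the parts that fall $\mathcal{J}$-below --- which is indeed how the linear height bound is established in the literature (see e.g.\ \cite{Colcombet11} for an exposition and the sharp constant). The step you single out as the main obstacle, namely charging the binary splits and idempotent layers against distinct Green's classes so that the total telescopes to $3|M|$ rather than something worse, is exactly where the work lies; your description of the charging scheme is on the right track but would need to be made precise (in particular, the ``Ramsey-type argument inside the group-like region'' is really a pigeonhole on prefix products in the Schützenberger group, not a Ramsey argument, and this distinction matters for the constant). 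The complexity analysis you give is correct and matches how the algorithm is extracted from the inductive proof.
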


In a set of \emph{updates} $\labelChanges{} = \{ (i_1,m_1),\dots,(i_p,m_p)\}$  with $(i_j,m_j) \in \NN\times \monoid$ each tuple consists of an index $i_j$ and the new monoid  element $m_j$ for this index.
Applying $\labelChanges{}$ to a sequence $s$ of monoid elements results in the sequence $s^U$ with $s^\labelChanges{}[i_j]=m_j$ for $j\in \{1,\dots,s\}$ and $s^U[j]=s[j]$ for all $j\not\in\{i_1,\dots,i_p\}$ where $s[i]$ denotes the $i$-th element of the sequence $s$.
We use the following lemma to apply such updates in Simon factorization trees. 

\begin{lemma}[\cite{DBLP:conf/alt/GroheLR17}]\label{lem:updateFactorizationTree}
  Given a Simon factorization tree $\factorizationTree{}$ of height $h$ over a sequence $s$ of elements from $\monoid$ and a set $\labelChanges{}$ of updates.
  There is an algorithm returning a Simon factorization tree of height $2h+3|M|$ for the updated sequence $s^\labelChanges{}$ in time $\onot(|\labelChanges{}|h+|\labelChanges{}||M|)$.
\end{lemma}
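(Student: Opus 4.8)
The plan is to maintain, for each Simon factorization tree, enough precomputed monoid products at every node so that an update along a single leaf-to-root path can be repaired locally, and then to reduce the general case of $p$ updates to $p$ such path repairs organized so that the total height growth is controlled. First I would recall that in a Simon factorization tree every internal node is either binary or an idempotent node all of whose children carry the same idempotent label $m_u$ with $m_u = m_u \cdot m_u$. The value $m_u$ at a node is determined by the ordered product of its children's values, so the only nodes whose labels can change after updating the leaves indexed by $U$ are the ancestors of the updated leaves. The key quantitative fact is that a single update affects only the $O(h)$ ancestors of one leaf.

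Next I would handle the two node types separately. For a binary node the repair is immediate: recompute $m_u = m_{u_1} m_{u_2}$ from the (possibly already repaired) children in constant time. The delicate case is an idempotent node with many children, since after an update the children no longer all share a common label, so the node is no longer a valid Simon node. Here I would replace that single idempotent node by a small balanced binary subtree over its children's current labels. Because the original node was idempotent, at most a bounded number of its children differ from the common value after applying $U$ (namely those that are themselves ancestors of updated leaves), so I can build a balanced binary tree of height $O(\log|U|)$ on the distinct-valued children while collapsing maximal runs of equal idempotent children using a single idempotent node, preserving the Simon property. The product of a run of equal idempotents is that idempotent, so correctness of the computed label follows from $m_u = m_u \cdot m_u$ together with associativity of $\monoidmult$.

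I would then assemble the global bound. Processing all of $U$ touches only the union of the $|U|$ root paths, a set of size $O(|U|h)$, and at each such node we spend $\onot(1)$ work for binary nodes and $\onot(|M|)$ (or amortized less) for rebuilding an idempotent node, giving the claimed $\onot(|U|h + |U||M|)$ running time. For the height bound $2h+3|M|$, I would argue that each original idempotent node contributes at most an $O(\log|U|)$ tall binary gadget, but rather than tracking logarithms I would invoke Simon's theorem (\cref{thm:simon}) to re-balance the rebuilt portions so that the new tree is again a valid Simon factorization tree; the bound $2h + 3|M|$ then comes from doubling each original level to accommodate the inserted gadgets plus the additive $3|M|$ slack guaranteed for any Simon factorization. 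I expect the main obstacle to be the idempotent-node rebuilding: one must verify carefully that collapsing unchanged children into a single idempotent node while splitting off the changed ones both preserves the Simon property and keeps the height growth linear rather than multiplicative in $h$, so that iterating over many updates does not blow up the height beyond $2h + 3|M|$.
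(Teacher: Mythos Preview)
The paper does not prove this lemma; it is quoted from \cite{DBLP:conf/alt/GroheLR17} and stated without proof, so there is no in-paper argument to compare your proposal against.

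On its own merits, your proposal has a real gap in the height bound. The in-place repair strategy---walk up from each updated leaf and locally rebuild every affected idempotent node---does not obviously produce the stated bound $2h+3|M|$, and you effectively acknowledge this when you write that you would ``invoke Simon's theorem to re-balance the rebuilt portions'' without specifying what is being rebalanced or why the outcome is $2h+3|M|$ rather than, say, $h+O(\log|U|)\cdot h$ or $O(h|M|)$. The problem is that updated leaf-to-root paths can pass through idempotent nodes at every level; if each such node is replaced by a gadget of nonconstant height, the increases compound along the path rather than adding once at the top. Your ``doubling each original level'' remark is an intuition, not an argument.

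The argument that gives exactly $2h+3|M|$ (and almost certainly the one in the cited source) avoids in-place repair altogether. The $|U|$ updated positions partition $s$ into at most $2|U|+1$ pieces: the updated singletons and the maximal unchanged intervals between them. Each unchanged interval is a contiguous block of leaves of $F$, and such a block decomposes canonically into at most $2h$ complete subtrees of $F$ (those hanging off the two leaf-to-LCA paths bounding the block). Chaining these subtrees by binary nodes yields a valid Simon factorization tree of height at most $2h$ for that interval, and its root label---the interval product---is computed in $O(h)$ time. This leaves a sequence of $O(|U|)$ monoid elements (the interval products interleaved with the new singleton values), over which a fresh Simon tree of height at most $3|M|$ is built via \cref{thm:simon} in time $O(|U|\cdot\mathrm{poly}(|M|))$. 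Hanging the interval trees beneath its leaves gives a Simon tree for $s^U$ of height at most $2h+3|M|$, and the total work is $O(|U|h+|U||M|)$. In this decompose-then-rebuild approach both bounds are immediate; your path-repair approach would need a substantially more delicate amortization to match them.
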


\subsection{Constructing the monoids and monoid morphisms for MSO learning}

In the following we construct and analyze the monoid structure used in the learning algorithm for \cref{thm:msoParameterLearning}.
Let $\tree$ be a tree over $\Sigma$ and $\tree_1$ the same tree over $\Sigma_1 = \Sigma\times 2^{\{y_1,\dots,y_\ell\}}\times \{?,N,P\}$ including information about the free variables of a formula $\psi(P,N,\bar y)$ in the labels.
Let $\Sigma_2 = \Sigma_1 \times Q' \times \lr$ be the alphabet from \cref{subsec:DfaConstruction} which extends $\Sigma_1$ in order to work with string automata instead of tree automata. 
Let $\funcAlphabetExtensionForDfa{}: V(\tree) \rightarrow \Sigma_2$ be the labeling function used in \cref{lem:sameStatesDtaDfa} and let $\tree_2$ be the tree that results from $\tree$ by labeling its nodes using~$\funcAlphabetExtensionForDfa{}$.
Note that the concrete states $Q'$, and therefore the whole construction, depend on $\psi(P,N,\bar y)$, the formula that checks whether there are parameters $\bar v$ for $\bar y$ such that $\phi$ is consistent with $\train$.

\begin{lemma}\label{lem:monoidsInnerAndOuter}
  Let $\phi(x\smid \bar y)$ be an MSO formula, $\tree$ a tree over $\Sigma$ with $\decomp{}(\tree) = \braces{P_1,\dots,P_n}$ and $\train$ a training set.
  Then there exists
  \begin{itemize}
    \item an alphabet $\Sigma_3$,
    \item a construction transforming $\tree$ to a tree $\tree_3$ over $\Sigma_3$     \item a monoid $\outermonoid{}$ with a set $F\subseteq \outermonoid{}$ of final elements 
    \item and a monoid morphism $\freeToOuterMorphism{}: \Sigma_3^* \rightarrow \outermonoid{}$ 
  \end{itemize}
  such that $\freeToOuterMorphism{}(\tree_3[P_1]) \in F$ if and only if there exists $\bar v \in V(\tree)^\ell$ such that $\model{\tree}{\phi}{x}{\bar v}$ is consistent with~$\train$.
\end{lemma}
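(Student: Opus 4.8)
The plan is to reduce the existential question ``is there a consistent parameter tuple?'' to a single monoid-membership test, building the required objects in three layers that mirror \cref{subsec:treeDecomp} and \cref{subsec:DfaConstruction}. First I would fold the whole learning condition into one MSO sentence over an enriched alphabet. Let
\[
  \psi(P,N,\bar y) := \forall x\,\bigl((P(x)\to\phi(x,\bar y))\wedge(N(x)\to\neg\phi(x,\bar y))\bigr),
\]
so that $\psi$ holds exactly when $\model{\tree}{\phi}{x}{\bar y}$ classifies every positive example positively and every negative example negatively. The coordinates $P,N$ and $\bar y$ are read from node labels through $\Sigma_1=\Sigma\times\parameters\times\{?,N,P\}$: the $\{?,N,P\}$-component is fixed by $\train$ (it is $P$ on positives, $N$ on negatives, $?$ elsewhere), while the $\parameters$-component is the parameter placement over which we will quantify existentially. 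Since $\qr(\psi)\le\qr(\phi)+1$ is constant, the MSO/automata correspondence yields a DTA $\dta$ over $\Sigma_1$, of size bounded only in $q$ and $\ell$, accepting exactly those labelings in which each $y_i$ marks a single node and $\psi$ holds.

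Second, I would push $\dta$ onto the heavy path decomposition $\decomp(\tree)$ using \cref{lem:sameStatesDtaDfa} and \cref{cor:dfaEquivalentToDta}. This gives a DFA $\dfa$ over $\Sigma_2=\Sigma_1\times Q'\times\lr$ together with the labeling $\funcAlphabetExtensionForDfa$, such that $\dta$ accepts a labeled tree if and only if $\dfa$, reading the root path $\tree[P_1]$ from its leaf towards the root, ends in a final state (equivalently $\rho(u_r)$ is accepting). Taking the transition monoid of $\dfa$ gives the \emph{inner} monoid $\innermonoid$ with its morphism $\freeToInnerMorphism$; an element records the effect on the states of $\dfa$ of reading one path segment, for \emph{one} fixed parameter placement and \emph{one} fixed tuple of cut-off states.

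Third -- and this is the real content -- I would existentially quantify over both the parameter placement and the cut-off states by passing to the \emph{outer} monoid $\outermonoid$, whose elements are maps $\parameters\to 2^{\innermonoid}$: the value of $\hat m$ on $S$ is the set of inner elements realizable when exactly the parameters in $S$ are placed in the dependent subtree read so far, ranging over all legal placements and over all cut-off states that are genuinely produced by some run on the corresponding cut-off subtrees. Multiplication convolves over disjoint unions,
\[
  (\hat m_1\cdot\hat m_2)(S)=\bigcup_{S_1\,\dot\cup\,S_2=S}\set{m_1 m_2}{m_1\in\hat m_1(S_1),\ m_2\in\hat m_2(S_2)},
\]
with neutral element sending $\emptyset\mapsto\{1_\innermonoid\}$ and every nonempty set to $\emptyset$; associativity is inherited from $\innermonoid$ and from disjoint union, and $\outermonoid$ is finite because $\innermonoid$ and $\parameters$ are. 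The transformed tree $\tree_3$ lives over $\Sigma_3=\Sigma\times\{?,N,P\}\times\lr$ (the parameter coordinate is forgotten, the structural cut-off direction kept), and the generator $\freeToOuterMorphism$ at a node $u$ ranges, for each subset $T_u$ of parameters hosted at $u$ and each cut-off pair $(S_{\text{cut}},q)$ supplied recursively by the outer-monoid value of the lower heavy path in the order $<_\hp$, over the inner element $\freeToInnerMorphism\bigl(\funcAlphabetExtensionForDfa(u)\bigr)$ with state $q$, contributing to index $T_u\,\dot\cup\,S_{\text{cut}}$. Finally I set $F=\set{\hat m\in\outermonoid}{\hat m(\{y_1,\dots,y_\ell\})\text{ contains an element mapping }q_0\text{ into a final state of }\dfa}$, so that membership forces all $\ell$ parameters to be used and the induced run to accept.

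Correctness then splits cleanly: a consistent $\bar v$ gives a placement whose accepting DTA-run is reproduced by $\dfa$ along the heavy paths via \cref{cor:dfaEquivalentToDta}, so the resulting accepting inner element lies in $\freeToOuterMorphism(\tree_3[P_1])(\{y_1,\dots,y_\ell\})$; conversely any witness for $F\ni\freeToOuterMorphism(\tree_3[P_1])$ unwinds, through the disjoint-union bookkeeping, into a single placement of each $y_i$ at exactly one node plus cut-off states backed by actual subtree runs, hence into an accepting run of $\dta$ and a consistent $\bar v$. I expect the main obstacle to be exactly this bookkeeping: ensuring each parameter is placed precisely once globally -- never double-counted between a path and its cut-off subtrees, never omitted -- and that every cut-off state appearing in an inner letter is certified by a genuine run on that subtree. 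The disjoint-union convolution and the recursion along $<_\hp$ are designed to enforce both, and the careful part of the proof is verifying that they do.
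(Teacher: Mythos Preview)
Your three-layer reduction matches the paper's proof almost exactly: both start from the same sentence $\psi$, pass through the same DTA/DFA/transition-monoid pipeline, and then quantify existentially over parameter placements via an outer monoid. The one substantive design difference is your outer monoid: you take $\outermonoid=\bigl(\parameters\to 2^{\innermonoid}\bigr)$ with disjoint-union convolution, whereas the paper takes the simpler $\outermonoid=2^{\innermonoid}$ with setwise product $\hat m_1\cdot\hat m_2=\{m_1m_2\mid m_1\in\hat m_1,\,m_2\in\hat m_2\}$. The paper's choice works because the DTA $\dta$ already rejects any labelling in which some $y_i$ is placed at zero or at more than one node; consequently every \emph{productive} inner element uniquely determines the set of parameters it has ``seen'' (this is exactly \cref{lem:functionMonoidParameters}), and the explicit parameter bookkeeping you build into the convolution is redundant. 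Your version is correct and arguably more transparent about why double placement cannot occur, at the price of a larger---though still constant-size---monoid.

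There is, however, a genuine gap in your definition of $\Sigma_3$. You set $\Sigma_3=\Sigma\times\{?,N,P\}\times\lr$, omitting all information about the cut-off subtree, and then say that $\freeToOuterMorphism$ at a node $u$ should ``range over cut-off pairs $(S_{\text{cut}},q)$ supplied recursively by the outer-monoid value of the lower heavy path''. But a monoid morphism $\freeToOuterMorphism:\Sigma_3^*\to\outermonoid$ is completely determined by its values on single letters, and a single letter of your $\Sigma_3$ carries no cut-off information whatsoever; two nodes on the same heavy path with identical $\Sigma_3$-labels but entirely different cut-off subtrees would be forced to the same outer element, which is wrong. The fix is the same trick you already accepted when going from $\Sigma_1$ to $\Sigma_2$: fold the required cut-off data into the alphabet. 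The paper takes $\Sigma_3=\Sigma\times\{P,N,?\}\times\outermonoid\times\lr$, so that the label at $u$ contains the outer-monoid element of the cut-off subtree, and $\tree_3$ is built from $\tree$ by computing these elements recursively along $<_\hp$. With that enlargement your $\freeToOuterMorphism$ becomes a genuine morphism on letters and the rest of your correctness argument goes through unchanged.
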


\begin{inlineproof}
  We first describe the construction of the monoid $\outermonoid{}$, the alphabet $\Sigma_3$ and the monoid morphism $\freeToOuterMorphism{}$ in five steps and then prove the lemma based on that description.
  The whole construction is based on $\psi$ depending on $\phi$.
  \[
    \psi(P,N,\bar y) = \forall x (P(x)\rightarrow \phi(x,\bar y)) \land (N(x) \rightarrow \neg \phi(x,\bar y))
  \]
  The formula $\psi(P,N,\bar y)$ checks for a given training set $\train$, split into positive ($P$) and negative examples ($N$), and a parameter vector $\bar v\in V(\tree)^\ell$ whether the hypothesis $\model{\tree}{\phi}{x}{\bar v}$ is consistent with~$\train$.
  This is essentially what we are going to test, but for a better runtime bound, we transform the formula into a monoid and an algorithm to check it and directly find consistent parameters.

  The computation of the monoid $\outermonoid{}$ used in \cref{thm:msoParameterLearning} builds upon the following intermediate constructions.
  \begin{enumerate}
    \item {DTA $\dta$ on $\Sigma_1 = \Sigma\times 2^{\{y_1,\dots,y_\ell\}}\times \{?,N,P\}$}
    \item {DFA $\dfa$ on $\Sigma_2 = \Sigma\times 2^{\{y_1,\dots,y_\ell\}}\times \{?,N,P\} \times Q \times \lr$}
    \item {Transition monoid $\innermonoid{}$ for $\dfa$ with final elements $F\!\subseteq\! \innermonoid{}$ and monoid morphism $\freeToInnerMorphism{} \colon \Sigma_2\! \rightarrow \innermonoid{}$}
    \item {Monoid morphism $\freeToInnerMorphism{}' \colon \Sigma_2' \rightarrow \innermonoid{} \text{ with }  \Sigma_2' = \Sigma\times 2^{\{y_1,\dots,y_\ell\}}\times \{?,N,P\} \times \innermonoid{} \times \{L,R\}$}
    \item {Monoid $\outermonoid{} = 2^\innermonoid$ with
      elements $\hat m \subseteq \innermonoid{}$ for every $\hat m \in \outermonoid{}$ and monoid morphism
     $\freeToOuterMorphism{} \colon \Sigma_3 = \Sigma\times \{?,N,P\} \times \outermonoid{} \times \lr \rightarrow \hat M$}   \end{enumerate}

  \subparagraph*{Step 1:}
  The conversion of the formula $\psi$ into a tree automaton $\dta$ is a standard construction (see e.g. \cite{buchi1960weak}).
  It involves extending the tree alphabet $\Sigma$ by unary relations for the free variables of $\psi$ resulting in the alphabet $\Sigma_1 = \Sigma \times \{P,N,?\}\times \parameters$.
  We use a an element from $\braces{P,N,?}$ instead of $2^\braces{P,N}$ since we assumed the training set to be free of contradictions, that is no example appears positive as well as negative.

  \subparagraph*{Step 2:}
  We use the construction from \cref{lem:sameStatesDtaDfa,cor:dfaEquivalentToDta} to convert a DTA with states $Q$ to a DFA with states $Q' = Q\,\dot\cup\,\braces{q_0}$.
  The constructed DFA $\dfa$ is able to simulate a run of $\dta$ on the tree $\tree_2$ over the alphabet $\Sigma_2 = \Sigma_1 \times Q' \times \lr$ where $\tree_2 = \tree$ but the labels are given by $\funcAlphabetExtensionForDfa{}$ from \cref{lem:sameStatesDtaDfa}.
  
  \subparagraph*{Step 3:}
  The monoid $\innermonoid{}$ is the transition monoid of~$\dfa$ computed by the standard construction (see e.g. \cite{tho97a}). 
  $\freeToInnerMorphism{}\colon \Sigma_2 \rightarrow \innermonoid{}$ is the corresponding monoid morphism and $F\subseteq \innermonoid{}$ is the set of accepting monoid elements, both taken from the construction of $\innermonoid{}$.

  \subparagraph*{Step 4:}
  In order to simulate DTAs using DFAs in \cref{lem:sameStatesDtaDfa}, we introduced the component $Q'$ in $\Sigma_2$ which stores the state of the tree automaton on the cut-off subtree at every position.
  In the fourth step we construct the monoid morphism $\freeToInnerMorphism{}'$ that works on $\Sigma_2' = \Sigma\times 2^{\{y_1,\dots,y_\ell\}}\times \{?,N,P\} \times \innermonoid{} \times \lr$; that is, we substitute the component containing for every node $u$ the state $q_u$ of $\dfa$ on the cut-off subtree by a monoid element $m_u\in \innermonoid{}$.
  Hence we modify the tree alphabet to contain a monoid element from $\innermonoid{}$ instead of a state $q\in Q'$.

  From the construction of transition monoids we know that every productive monoid element $m\in\innermonoid$ can be interpreted as a function $f_m: Q' \rightarrow Q'$ where $Q'=Q\cup \braces{q_0}$ is the set of states from $\dfa$.
  Let $f:\innermonoid{} \rightarrow Q'$ be the function defined by $f(m)= f_m(q_0)$.
  This means that for $m\in\innermonoid{}$ and $w\in\Sigma_2^*$ with $\freeToInnerMorphism{}(w) = m$ we have that $f(m) = q$ if and only if $\dfa$ is in state $q$ after reading $w$.
  We now define $\freeToInnerMorphism{}'$ by using $\freeToInnerMorphism{}$ and the function $f$ from above as
  \[
    \freeToInnerMorphism{}'((a_1, m_1, d_1),\dots,(a_n, m_n, d_n)) = \freeToInnerMorphism{}((a_1, f(m_1), d_1),\dots,(a_n, f(m_n), d_n))
  \]
  where $a_1,\dots a_n \in \Sigma_1, m_1,\dots m_n \in \innermonoid{}$ and $d_1,\dots,d_n \in \lr$.
  That is in each label of the input string $w'= (a_1, m_1, d_1),\dots,(a_n, m_n, d_n)$ we substitute $m_i\in \innermonoid{}$ by $f(m_i)\in Q'$ and then apply the monoid morphism $\freeToInnerMorphism{}$.

  With $\innermonoid{}$ and $\freeToInnerMorphism{}'$ we can check for a given $\bar v \in V(\tree)^\ell$ whether $\model{\tree}{\phi}{x}{\bar v}$ is consistent with $\train$. 
  This check would also be possible in linear time using the DTA for $\psi$ directly.
  This more complex construction allows us to compute an index without knowledge of $\train$ and use it to speed up the learning time after getting access to $\train$.

  \subparagraph*{Step 5:}
  In the last step we construct the monoid $\outermonoid{}$ and the monoid morphism $\freeToOuterMorphism{}$ based on $\innermonoid{}$ and $\freeToInnerMorphism{}'$ such that $\freeToOuterMorphism{}$ can be used to check whether there is a consistent set $\bar v \in V(\tree)^\ell$ of parameters.
  We then use $\freeToOuterMorphism{}$ and $\outermonoid{}$ in the actual learning algorithm.
  Let $\outermonoid{} = (2^\innermonoid{}, \cdot_\outermonoid{}, \braces{1_\innermonoid{}})$ be a structure with multiplication 
  \(
    \hat m_1 \cdot_\outermonoid{} \hat m_2 = \{ m_1\cdot_\innermonoid{} m_2 \mid m_1 \in \hat m_1, m_2\in \hat m_2 \}.
  \)
  $\outermonoid{}$ is a monoid as it is closed under multiplication and $\braces{1_\innermonoid{}}\in\outermonoid{}$ is neutral for $\cdot_\outermonoid{}$ because $1_\innermonoid{}\in\innermonoid{}$ is the neutral element of $\innermonoid{}$. 
  Neutrality of $\hat m$ holds since $\hat m \cdot \braces{1_\innermonoid{}} = \braces{m\cdot 1_\innermonoid{} \mid m\in \hat m} = \hat m$ due to the neutrality of $1_\innermonoid{}$ for $\innermonoid{}$.

  We now define a monoid morphism $\freeToOuterMorphism{}\colon \Sigma_3^* \rightarrow \outermonoid{}$ on the basis of $\freeToInnerMorphism{}'$.
  Let $\Sigma_3 := \Sigma \times \{P,N,?\} \times 2^\innermonoid{} \times \{L,R\}$.
  For $(a,\hat m, d) \in \Sigma_3$ with $a \in \Sigma\times \{P,N,?\}$ and $d \in \lr$ we let
  \begin{align*}
    \freeToOuterMorphism{}((a,\hat m,d)) = \!\braces{ \freeToInnerMorphism'((a,\bar y,m,d) ~\middle|~ \bar y \in \parameters, m\!\in \hat m}.
  \end{align*}

  For a position $u\in V(\tree)$ labeled $(a,\hat m,d)\in \Sigma_3$, $\freeToOuterMorphism{}(a,\hat m,d)$ is the set of monoid elements we get by distributing some parameters on $u$, selecting a monoid element $m\in \hat m$ and calling $\freeToInnerMorphism{}'$ on that.
  The existence of $m$ within $\hat m$ implies that the parameters $\monoidParameters{}(m)$ can actually be found in $u$'s cut-off subtree.
  We will use this fact as a divide-and-conquer rule to trace the parameters $\bar y$ to positions $\bar v$ such that the formula with parameters $\bar v$ is consistent with $\train$.
  For a word $w\in \Sigma_3^*$ we split the word into its positions $w_1,\dots,w_n$ and compute the product $\hat m = \prod_{i=1}^n \freeToOuterMorphism{}(w_i)$ of the monoid elements of those.
  This is possible since $\freeToOuterMorphism{}$ is a monoid morphism from $\freemonoid{}^{\Sigma_3}$ to $\outermonoid{}$.
  Note that the interpretation with cut-off subtrees only makes sense for strings $w = \tree_3[P_i]$ that are heavy paths or substrings of those.

  Let $w$ be the string of labels from the heavy path $P_1$ containing the root of $\tree$.
  Since $\freeToOuterMorphism{}$ implicitly tests all possible distributions of parameters in dependent subtrees, we know that if $\freeToOuterMorphism{}(w) \cap F \neq \emptyset$ then there is a consistent parameter setting.
  This holds since every $m\in \freeToOuterMorphism{}(w)$ belongs to at least one distribution $\bar v$ of parameters in $\tree$ and in the other direction any consistent $\bar v$ results in an accepting monoid element $m\in F \subseteq \innermonoid{}$ appearing in $\freeToOuterMorphism{}(w)$.
  Correspondingly there is no consistent parameter setting if $\freeToOuterMorphism{}(w) \cap F = \emptyset$.

  \subparagraph*{Correctness:}
  For the correctness and the actual proof of Lemma~\ref{lem:monoidsInnerAndOuter}, consider first the formula $\psi$.
  Clearly, we have that $\psi$ accepts $P,N$ for some parameter setting $\bar v$ if and only if $\model{\tree}{\phi}{x}{\bar v}$ is consistent with the training set $\train$, given that the unary relation $P$ contains all positive and $N$ all negative examples from $\train$.
  The first conversion from $\psi$ into the DTA $\dta$  is a standard construction such that $\dta$ accepts $\tree_1$ if and only if $\model{\tree}{\phi}{x}{\bar v}$ is consistent with $\train$ where $\tree_1$ is the tree $\tree$ extended by the unary relations $P$ and $N$ and additional unary relations for the parameters $\bar v$.
  For the next intermediate model we use Lemma \ref{lem:sameStatesDtaDfa} to get that the DFA $\dfa$ accepts $\decomp{}(\tree)$ (or rather $\tree_2[P_1]$) if and only if $\dta$ accepts $\tree_1$. 
  In order to run $\dfa$ on $\decomp{}(\tree)$, we extend the alphabet of $\tree_1$ to also include the state of $\dta$ on the cut-off subtree at every position.
  We get from Corollary \ref{cor:dfaEquivalentToDta} that the extended labels for $\tree$ can be computed just-in-time using only $\dfa$ and $\decomp{}(\tree)$.

  The remaining two steps involve monoid structures.
  Since $\innermonoid{}$ is the transition monoid of $\dfa$, there is a set of final elements $F$ such that for every $w\in \Sigma_2^*$, $\freeToInnerMorphism{}(w) \in F$ if and only if $\dfa$ accepts $w$.
  We assume that for every position $i$, the monoid element $m\in\innermonoid{}$ occurring in the label of $P_1[i]$ corresponds to the cut-off subtree at that position.
  Let $\bar v$ be the set of parameters given in the labels of $\decomp{}(\tree)$.
  For the string $w_{P_1}\in \Sigma_2$, $\freeToInnerMorphism{}(w_{P_1}\in F$ if and only if $\model{\tree}{\phi}{x}{\bar v}$ is consistent with $\train$.
  The adaptation $\freeToInnerMorphism{}'$ of $\freeToInnerMorphism{}$ only adds the conversion from a monoid element $m\in \innermonoid{}$ to the corresponding state $q\in Q'$, which hat to be applied outside of $\freeToInnerMorphism{}$ otherwise.
  We use monoid elements in the labels instead of states as this allows us to stay in the monoid domain instead of switching back and forth between states and monoids making the algorithms more readable.

  The remaining transformation is from $\innermonoid{}$ to $\outermonoid{}$ and from $\freeToInnerMorphism{}'$ to $\freeToOuterMorphism{}$.
  By definition, $\freeToOuterMorphism{}$ assigns to a string $w\in \Sigma_3$ the set of monoid elements $\hat m\subseteq \innermonoid{}$, for every distribution of parameters in $w$ and for every choice of $m_i\in \hat m_i$ where $\hat m_i$ is the monoid element in the label of position $i$ speaking about the cut-off subtree at $i$.
  Thus we have by induction that $\freeToOuterMorphism{}$ assigns exactly those monoid elements from $\innermonoid{}$ to $w$ that can be reached for a distribution $\bar v$ of parameters in $\tree$.
  This is exactly the statement of \cref{lem:monoidsInnerAndOuter} as $\innermonoid{}$ checks for consistency of a concrete parameter setting $\bar v$ while $\outermonoid{}$ performs the existential quantification.
\end{inlineproof}

There are more direct ways to check whether there are consistent parameters, for example by existentially quantifying every $y_i$ in $\psi$ and then model checking the resulting formula.
The construction we presented gives us the opportunity to exploit the connection between $\innermonoid{}$ and $\outermonoid{}$ to not only return whether there is a consistent parameter setting $\bar v$, but to actually compute such a parameter setting if there is one.
In order to do this, we categorize the elements from $\innermonoid{}$ from \cref{lem:monoidsInnerAndOuter}.
We know that $\dfa$ only accepts trees in which every parameter, indicated by a unary relation, is assigned exactly once.
Thus, we get that every productive monoid element $m\in \innermonoid{}$, that is we can reach some $m'\in F$ using $m$, has to contain the information which parameters have been read.
Otherwise we would have an accepted tree where a single parameter has been assigned multiple times or not at all and hence we can assign a set of parameters to every productive monoid element.

\begin{lemma}\label{lem:functionMonoidParameters}
  There is a function $\monoidParameters: \innermonoid{} \rightarrow \parameters$ that assigns a set $\bar y \in \parameters$ of parameters to each productive monoid element from $\innermonoid{}$ in a consistent way.
  That is, for a tree $\tree$ and a substring $w$ of a path in $\decomp(\tree)$ with $\freeToInnerMorphism{}'(w)=m$, exactly the parameters $\monoidParameters{}(m)$ occur in the dependent subtree of $w$.
  For every other monoid element $\monoidParameters$ assigns the empty set.
\end{lemma}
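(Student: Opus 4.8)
The plan is to read the parameter set off the algebraic behaviour of a monoid element and to use the global "each parameter exactly once" property of the automaton to show this is well-defined. First I would recall from Step~1 of \cref{lem:monoidsInnerAndOuter} that the DTA $\dta$, and hence the DFA $\dfa$ and its transition monoid $\innermonoid{}$, arises from $\psi(P,N,\bar y)$ by the standard MSO-to-automaton construction, in which every free first-order variable $y_j$ is encoded by a unary relation that the automaton forces to be a singleton. Consequently $\dfa$ accepts a path encoding only when, in the corresponding tree, each parameter $y_j$ labels exactly one node. I would then define, for a productive $m\in\innermonoid{}$, the value $\monoidParameters(m)$ to be the set of parameters occurring in the dependent subtree of any substring $w$ of a heavy path with $\freeToInnerMorphism'(w)=m$, and set $\monoidParameters(m)=\emptyset$ for every non-productive $m$. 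The entire content of the lemma is that the first case is independent of the chosen representative $w$.

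For this independence I would argue by contradiction using a completing context. Suppose $w_1,w_2$ are substrings of heavy paths with $\freeToInnerMorphism'(w_1)=\freeToInnerMorphism'(w_2)=m$ for a productive $m$, but their dependent subtrees carry different parameter sets, say $y_j$ lies in the one for $w_1$ but not in the one for $w_2$. Productivity yields $m',m''\in\innermonoid{}$ with $m'\cdot m\cdot m''\in F$; realizing $m'$ and $m''$ by actual path fragments $L$ and $R$ gives an accepted configuration whose relevant heavy path is $L\,w_1\,R$. Since $\freeToInnerMorphism'(L\,w_1\,R)=m'\,m\,m''\in F$, the corresponding tree is accepted, so $y_j$ occurs exactly once in it; as $y_j$ already lies in the dependent subtree of $w_1$, it occurs nowhere in the $L$- or $R$-parts. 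Replacing $w_1$ by $w_2$ leaves the monoid value unchanged, so $L\,w_2\,R$ is again accepted, yet now $y_j$ occurs neither in the dependent subtree of $w_2$ nor in $L,R$, hence zero times in an accepted tree, contradicting the singleton property. By symmetry the two parameter sets coincide, so $\monoidParameters$ is well-defined and, by construction, records exactly the parameters of the dependent subtree.

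I expect the main obstacle to be the realization step: turning the abstract witnesses $m',m''$ of productivity into genuine path fragments $L,R$ for which $L\,w_1\,R$ is the heavy path of an actual labelled tree accepted by $\dfa$, respecting both the heaviness constraints of $\decomp(\tree)$ and the cut-off-subtree semantics of the $\innermonoid{}$-component of the alphabet. I would handle this by noting that $\innermonoid{}$ is generated by the letter images $\freeToInnerMorphism'(\sigma)$, so $m'$ and $m''$ admit string preimages, and that the productive elements relevant here are exactly those already arising from genuine subtrees in the indexing construction; crucially, no separate argument for conflict-freeness of parameters across $L$, $w_i$ and $R$ is needed, since it is enforced by $m'\,m\,m''\in F$. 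An equivalent, representative-free phrasing that I would keep in reserve is that the singleton-checking component of $\dfa$ tracks, in every productive state, the set of parameters already seen, so that $\monoidParameters(m)$ can be obtained directly by applying $m$ to the ``no parameter seen'' state and reading off which parameters have been marked.
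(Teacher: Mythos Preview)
Your proposal is correct and follows essentially the same substitution argument as the paper: define $\monoidParameters$ via any representative and show well-definedness by swapping one representative for another inside an accepted configuration, deriving a contradiction with the ``each $y_j$ exactly once'' constraint. The paper's version is terser---it simply posits a second accepted tree $\tree'$ containing $w'$ and performs the swap---whereas you make the completing context explicit via productivity witnesses $m',m''$ and even flag the realization step as the delicate point; both arrive at the same contradiction.
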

\begin{proof}
  For a non-productive $m\in\innermonoid$ there is no unique set of parameters to be assigned.
  This holds as for example it does not matter whether we have read the parameter $y_1$ or the parameter $y_2$ twice, any resulting monoid element will not be accepting in any case.
  We therefore restrict ourselves to productive monoid elements and return the empty set for all others.
  Let $w$ be a string and $\tree_w$ be the dependent subtree of $w$ in $\tree$.
  The set of parameters occurring in $\tree_w$ is defined as the union of parameters occurring in $w$ and the cut-off subtrees for every $u\in w$. 
  Let $\tree$ be a tree containing the information from $\train$ such that there is a consistent parameter setting $\bar v$.
  Let $w$ be a substring from some $P_i \in \decomp{}(\tree)$.
  We define $\monoidParameters{}(m)$ as the set of parameters that occur in the dependent subtree of $w$.

  This does not lead to contradictions as for a second accepted tree $\tree'$ and $w'$ with $\freeToInnerMorphism{}'(w')=\freeToInnerMorphism{}'(w)$ but different parameters in the dependent subtrees of $w$ and $w'$, we could substitute the dependent subtree of $w$ by the one of $w'$ resulting in a tree that needs to be rejected by $\freeToInnerMorphism{}$ as either a parameter is assigned twice or not at all.
  Since $\freeToInnerMorphism{}'(w')=\freeToInnerMorphism{}'(w)$, the monoid morphism $\freeToInnerMorphism{}'$ would map both trees to the same monoid element and thereby either accept both trees or none of them.
\end{proof}

  Note that $\monoidParameters{}(1_\innermonoid{}) = \emptyset$ since $1_\innermonoid{}$ is idempotent ($1_\innermonoid{} = 1_\innermonoid{}1_\innermonoid{}$) and thus if $\monoidParameters{}(1_\innermonoid{})\neq \emptyset$ we would have $\monoidParameters{}(1_\innermonoid{}) \neq \monoidParameters{}(1_\innermonoid{}1_\innermonoid{})$ which does not make sense because $1_\innermonoid{}$ is idempotent.

\subsection{Algorithms}\label{subsec:algorithms}
Using the monoids $\innermonoid{}$ and $\outermonoid{}$ as well as the monoid morphism $\freeToOuterMorphism{}$ from Lemma \ref{lem:monoidsInnerAndOuter}, we can compute a consistent parameter setting $\bar v$ for a given formula $\phi(x\smid \bar y)$ and a training set~$\train$ proving \cref{thm:msoLearning}.
The presented algorithm is split in the following three parts, where the first part is independent of $\train$.
\begin{enumerate}
  \item Indexing: Computation of the auxiliary structure ( $\onot(|\tree|)$)
  \item Updating: Modification of the auxiliary structure to take into account $\train$ ($\onot(|\train|\cdot \log (|\tree|))$)
  \item Tracing: Identification of consistent parameters ($\onot(\ell\cdot \log(|\tree|))$)
\end{enumerate}
The last two parts form the learning phase of the algorithm, while the first part constitutes the precomputation or indexing phase. 
We assume that the underlying formula $\phi$ is fixed and therefore ignore factors depending only on $\phi$.
Recall that those factors might be be non-elementary due to exploding statespaces of the constructed automata making the algorithm a mostly theoretical result.

Formally, we prove \cref{thm:msoParameterLearning} about parameter learning.
Together with a brute-force test of every semantically different formula with quantifier rank and free variables bounded, this proves \cref{thm:msoLearning} which considers the model learning problem and is the main theorem of this paper.

\paragraph*{The indexing algorithm}
The indexing algorithm starts by computing the monoid $\outermonoid$ from $\phi$ as described in \cref{lem:monoidsInnerAndOuter} as well as the heavy path decomposition $\decomp(\tree)$ together with its dependence relation $<_\hp$.
Computing $\outermonoid$ and the corresponding monoid morphism $\freeToOuterMorphism{}$ only depends on $\phi$ and can therefore be achieved in constant time for any fixed formula. 
The computation of $\decomp{}(\tree)$ is linear in $|\tree|$.
The indexing algorithm outputs the set of Simon factorization trees over $\outermonoid{}$ for each $P_i$ from $\decomp{}(\tree)$ computed by the algorithm from \cite{Simon90} as well as $<_\hp$.
Technically we assume $\train = \emptyset$ when extending the labels of $\tree$ to integrate information on the positions of the examples as defined in \cref{lem:monoidsInnerAndOuter}.
We have to use the empty set of examples since the indexing algorithm does not have access to $\train$.
Similar to the case of DFAs simulating a DTA in \cref{cor:dfaEquivalentToDta} we use the dependence relation $<_\hp$ as an order in which the factorization trees are computed.
Essentially a label $a_u\in \Sigma_3$ of a node $u\in V(\tree)$ contains the monoid element $\hat m\in \outermonoid{}$ of the cut-off subtree at $u$ which is available when adhering to the order $<_\hp$.
As the computation of $\outermonoid{}$ only depends on $\phi$, the overall runtime is dominated by the computation of the factorization trees in $\onot(|\tree|\cdot \text{poly}(|\outermonoid{}|))$ by \cref{thm:simon}.

\paragraph*{The update algorithm}

In the update part of the learning algorithm we add the information from the actual training set $\train$ to the factorization trees computed in the indexing part of the algorithm.
It is easy to see that recomputing the factorization trees of all modified heavy paths may take time $\onot(|\tree|)$ as a single heavy path may be linear in the size of $\tree$.
In the presented algorithm the updates are performed bottom-up, that is we change the labels in the leaves of the factorization tree, update those and then propagate this information towards the root of $\tree$ resulting in further updates in factorization trees.
The algorithm works in two stages: an outer stage that collects all updates for each heavy path and an inner stage that actually performs the update.

The outer stage orchestrates the update process by computing the set of updates $\labelChanges_{P_i}$ for every $P_i\in \decomp{}(\tree)$.
In the inner stage we use an update algorithm that given a set $\labelChanges{}$ of label changes and a factorization tree $F$ returns a factorization tree $F'$ about twice as high independent of $\labelChanges{}$.
In order to maintain a bounded height of the updated factorization trees, we bound the number of update steps for each factorization tree.
The set $\labelChanges_{P_i}$ consists of updates for every $u\in P_i \cap \train$ as well as (possible) updates from paths $P_j$ with $P_j <_\hp P_i$ due to previous updates.
Both kind of updates change the label of a single node from $\tree$ and thus can be treated in the same way.
An update at $u$ due to an example $(u,+) \in \train$ modifies the component $\braces{P,N,?}$ of $u$'s label.
Updates in the cut-off subtree of $u$ may induce a change in the monoid element $\hat m$ in the label of $u$.
This holds as $\hat m$ depends on the whole cut-off subtree at $u$.
If we update the factorization tree for $P_i\in \decomp{}(\tree)$ after the factorization trees for all paths $P_j <_\hp P_i$ have been updated, every factorization tree is updated at most once.
There are no subsequent or late updates on $F_{P_i}$ as label changes from $\train$ are known from the start of the update algorithm and all other label changes are due to updates in the dependent subtree of $P_i$ containing only paths $P_j <_\hp P_i$. 

The inner algorithm, which actually performs the update, is taken from \cref{lem:updateFactorizationTree}.
By \cref{lem:logarithmicHeavyPathsToTheRoot} every path from a node to the root touches at most logarithmically many heavy paths resulting in a total runtime of the update algorithm of $\onot(|\train|\log |\tree|)$ viewing $|\outermonoid{}|$ as constant as it only depends on $\phi$.

\paragraph*{The tracing algorithm}
The tracing algorithm gets the updated set of factorization trees $\mathcal F = F_{P_1},\dots,F_{P_n}$ and computes a set of parameters $\bar v$ such that $\model{\tree}{\phi}{x}{\bar y}$ is consistent with $\train$.
We again divide the algorithm into an inner and an outer algorithm where the inner algorithm traces the parameters similar to~\cite{DBLP:conf/alt/GroheLR17} within a single factorization tree and the outer algorithm orchestrates the search over $\mathcal F$.
The complete algorithm, including the inner and outer part of the tracing process, is given in pseudocode in \cref{alg:trace}.
It uses two stacks to track the open tasks of the inner and outer algorithm respectively.
The stack for the outer algorithm contains factorization trees and target monoids, such as $(F_{P_i},m_i)$, while the stack for the inner algorithm additionally stores the current position within the factorization tree.

Let $P_1$ contain the root of $\tree$ and thus be maximal according to $<_\hp$.
Let $\hat m_r\in\outermonoid{}$ be the monoid element reached in the root of $F_{P_1}$.
The tracing starts by choosing the (initial) local target $m_r\in \hat m_r \cap F$ for the root of $F_{P_1}$ where $F$ is the set of final monoid elements from \cref{lem:monoidsInnerAndOuter}. 
If $\hat m_r \cap F = \emptyset$, then there is no assignment of parameters to positions of $\tree$ that is consistent with $\train$ and the algorithm stops.
Note that the choice of $m_r$ is arbitrary, as every accepting monoid element corresponds to an assignment $\bar v$ of the parameters $\bar y$ such that $\model{\tree}{\phi}{x}{\bar v}$ is consistent with $\train$.
Thus we have $\monoidParameters{}(m_r) = \braces{y_1,\dots,y_\ell}$ for every possible choice of $m_r$ such that all parameters will be traced to a node of $\tree$. 

For a given factorization tree $F_{P_i}$, the algorithm traces the set of parameters in $F_{P_i}$ based on \cref{lem:functionMonoidParameters} to leaves of $F_{P_i}$ and then continues with the largest (according to $<_\hp$) remaining heavy path that contains parameters according to the previous tracing steps.
Let $u$ be position in $F_{P_i}$ with local target $m_u$.
Then we select monoid elements $m_1,m_2$ from the children $u_1,u_2$ of $u$ such that $m_1m_2 = m_u$ using a brute-force test.
The tracing continues this way for every $u_i$ where $\monoidParameters{}(m_i) \neq \emptyset$ with $m_i$ as local target at $u_i$ until the leaves of $F_{P_i}$ are reached and outputs pairs $(u,m_u)\in V(\tree)\times \innermonoid{}$ with $\monoidParameters{}(m_u)\neq \emptyset$.
An open task $(F,u,m_u)$ for the inner algorithm consists of the factorization tree $F$, the current position $u\in F$ and the (local) target monoid $m_u$ for the node $u$.
It then checks whether there are monoid elements $m_1,m_2$ at the children $u_1,u_2$ of $u$ such that $m_1m_2 = m_u$ and pushes $(F,u_i,m_i)$ for $i\in \braces{1,2}$ if $\monoidParameters(m_i)\neq \emptyset$.
When $u$ is a leaf in $F$ the algorithm outputs the tuple $(F,u,m_u)$ which is then interpreted by the outer algorithm.
In \cref{alg:trace} this tracing within a single factorization tree is done in lines \ref{alg:trace:line:startInner} to \ref{alg:trace:line:endInner}.

For a tuple $(F,u,m_u)$ from the inner algorithm where $u$ has the label $(a,\train_u,\hat m_{u'},D)\in \Sigma_3$, the outer algorithm partitions the parameters  $\monoidParameters{}(m_u)$ into sets $K_1,K_2\subseteq \monoidParameters{}(m_u)$.
The parameters $K_1$ are the ones which are placed at $u$ while those from $K_2$ are further traced in the cut-off heavy path $P_j$ at $u$.
This is implicitly done by selecting a monoid element $m_{u'}$ for the cut-off heavy path which then implies that $K_2 = \monoidParameters{}(m_{u'})$ and $K_1 = \monoidParameters{}(m_u\setminus K_2$.
For every $m_{u'} \in \hat m_{u'}$ the algorithm checks whether $\freeToInnerMorphism{}((a,K_1,\train_u,m,D) = m_u$.
This means that the algorithm brute-force checks for every $m_{u'}\in \hat m_u'$ that is reachable in the cut-off subtree by a distribution of the parameters whether it yields $m_u$ at $u$.
The brute-force test will always be successful since the monoid element $\hat m \in \outermonoid{}$ at $u$ contains exactly those monoid elements reachable via a combination of parameters at $u$ and some $m_{u'} \in \hat m_{u'}$ for the cut-off subtree.
The monoid element $m_{u'}$ found this way is then, together with the root the factorization tree $F_{P_j}$ for which $m_{u'}$ was selected, given to the inner tracing algorithm.
Additionally the parameters from $K_1$ are fixed at $u$.

The DFA underlying the monoid structures only accept trees where every parameter has been assigned to exactly one position, and thus we have for every $m\in F$ that $\monoidParameters{}(m) = \{y_1,\dots,y_\ell\}$.
Since the tracing algorithm starts with some $m\in F$ in the root of $F_{P_1}$, we know that by continuing the above steps the algorithm computes an assignment $\bar u$ for the parameters $\bar y$.

\newcommand\mycommfont[1]{\footnotesize\ttfamily\textcolor{blue}{#1}}
\SetCommentSty{mycommfont}

\begin{algorithm}[t]
\LinesNumbered
\DontPrintSemicolon

\KwIn{Tree $\tree$, Simon factorization trees $F_{P_1},\dots F_{P_n}$ of the heavy path decomposition $P_1,\dots,P_n$ of $\tree$ and the partial order $<_\hp{}$}
\KwOut{For each of the parameters $y_1,\dots,y_\ell$ a position $u_1,\dots,u_\ell$ from $\tree$ such that $\model{\phi}{x}{\bar u}{\tree}$ is consistent with $\train$}

  $P^* \gets \max_{<_\hp}(\braces{P_1,\dots,P_n})$

  $\bar m \gets \text{label}(\text{root}(F_{P^*}))$

  \If {$\bar m \cap F = \emptyset$}{
    abort 
    \tcp*[f]{there are no consistent parameters}
  }

  $m \in \bar m \cap F$

  outer.push($(F_{P^*},m)$)

  \Repeat{$\text{outer.isempty}()$}{

  $(F,m) \gets \text{outer.pop}()$

  inner.push($(F,\text{root}(F),m)$)

  \newcommand{\current}{u}
  \Repeat{$\text{inner.isempty}()$}{
    $(F,\current,m) \gets \text{inner.pop}()$\label{alg:trace:line:startInner}

    \uIf {$\neg \text{leaf}_{F}(\current)$}{
      $u_1 \gets$ leftChild($F,\current$)

      $u_2 \gets$ rightChild($F,\current$)

      $\bar m_1 \gets \text{label}(u_1)$ 

      $\bar m_2 \gets \text{label}(u_2)$

      $m_1,m_2 \in \braces{m_1,m_2 \mid m_1m_2=m, m_1\in \bar m_1, m_2\in \bar m_2}$
      \tcp*[f]{exhaustive search}

      \If{$\monoidParameters{}(m_1)\neq \emptyset$}{inner.push($F,u_1,m_1$)}

      \If{$\monoidParameters{}(m_2)\neq \emptyset$}{inner.push($F,u_2,m_2$)}\label{alg:trace:line:endInner}

    }
    \Else{
      \If{$\text{params}(m)\neq \emptyset$}{\label{alg:trace:line:startLeaf}
        find $k\in \text{params}(m)$ and $m' \in \bar m' = \project{}(\text{label}(\current))$ with $\freeToOuterMorphism{}(a,k,m',\dots) = m$ 
        \tcp*[f]{exhaustive search}\label{alg:trace:line:localParams}

        \If{$\monoidParameters{}(m')\neq \emptyset$}{
          outer.push($\text{cutoff}(\current),m'$)
        }

        \ForEach{$y\in k$}{
          $ys[y] \gets \current$
        }
      }\label{alg:trace:line:endLeaf}
    }
  }
  }
  \Return $ys$

\caption{Computing a consistent parameter setting for trees.}\label{alg:search}\label{alg:traversalBinaryTreeUnaryRelation}\label{alg:trace}
\end{algorithm}

For each parameter $y_i$, the tracing algorithm works on at most $\log |\tree|$ heavy paths by \cref{lem:logarithmicHeavyPathsToTheRoot} and uses constant time within each factorization tree.
Correspondingly, by using the update algorithm from \cite{DBLP:conf/alt/GroheLR17} and \cref{lem:logarithmicHeavyPathsToTheRoot} we get a runtime of $\onot(|\train|\cdot \log |\tree|)$ for the updating algorithm.
In total the time needed to find a consistent parameter setting $\bar v\in V(\tree)^\ell$ for $\phi$ is an indexing time in $\onot{}(|\tree|)$ and a search time in $\onot(\log |\tree|\cdot (|\train|+\ell))$.

Observe that for a monoid element $\hat m$ of a node $u$ every $m\in \hat m$ can be reached.
This holds by induction as for the leaves we add those $m$ to $\hat m$ which can be reached by assigning any possible subset $K\in \parameters{}$ of the parameters to such a leaf.
For inner nodes, we know that this holds for the cut-off subtree by induction. 
Then $\freeToOuterMorphism{}$ creates $\hat m$ by computing the set of all monoid elements $m\in\innermonoid{}$ reachable by combinations of monoid elements $m'\in \hat m'$ for the cut-off subtree and parameters assigned to that position.

We know that if there is a consistent parameter setting $\bar v$, then by \cref{lem:monoidsInnerAndOuter} there exists $\hat m$ with $\hat m\cap F \neq \emptyset$ in the root of $F_{P_1}$  .
The parameter configuration $\bar v$ found by the algorithm is consistent with $\train$ as it is computed in a way that $\freeToInnerMorphism{}'(\tree_2[P_1]) = m_r$ with $\in F$.
The search for parameters is always successful since for every $m\in \hat m$ at some node $u$ there is at least one way to distribute a subset of the parameters $\bar y$ in the cut-off subtree of $u$ such that $m$ represents the cut-off subtree, that is $\freeToInnerMorphism{}$ returns $m$ on it.
The correctness of the inner tracing algorithm follows directly from \cite{DBLP:conf/alt/GroheLR17}.
For the outer part of the algorithm it brute-force checks locally every possible distribution of parameters among a node $u$ and its cut-off subtree starting from an accepting monoid element $m\in F$.
The positions $\bar v$ for the parameters $\bar y$ found this way are consistent since $m\in F$ and integrating $\bar v$ in the labels of $\tree$ resulting in $\tree'$ ensures that $\freeToInnerMorphism{}'(\decomp{}(\tree')) = m \in F$ which is accepting.
Whenever there is a consistent $\bar v$, then for the monoid element $\hat m_u$ in the root of $F_{P_1}$ there is some $m\in \hat m_u\cap F$ by the construction of $\psi$ and $\outermonoid{}$ such that a solution will be found by the algorithm.

In the constructed set of factorization trees $F_{P_1},\dots,F_{P_n}$ we assume that for every node $u$ the monoid element $m$ in the label of $u$ is exactly the monoid element returned by $\freeToOuterMorphism{}$ on the cut-off subtree.
This is guaranteed by the indexing and updating algorithm since they work bottom-up and simply copy the computed monoid element $m$ of $u$'s cut-off subtree into its label.

\section{Online learning of MSO formulas}

For Theorem \ref{thm:msoLearning} we assumed that after an indexing phase the complete training set $\train$ is known and the learning task is to find a hypothesis consistent with $\train$. 
We now lift this result to an online setting where we again have a linear indexing phase and then on input of new batch of examples $\train_i$ the algorithm updates its hypothesis $H$ such that $H_i$ is consistent with all examples $\train = \bigcup_i \train_i$ it has seen so far.
This online setting allows examples to arrive over time which is natural for many tasks with human interaction.
Note that the algorithm can also handle label updates of the nodes as both types of updates induce a label change in the tree $\tree$.
Label changes are a common type of update in a database setting since updating the attributes of an already present entity can be modeled by an update of the entity's label.

An \emph{online learning algorithm} is an algorithm that takes as input a background structure~$\tree$, an index $I(\tree)$ and a sequence $\train = (u_1,c_1),(u_2,c_2),\dots$ of training examples and outputs for every $i\leq |\train|$ a hypothesis $H_i = \model{\tree}{\phi}{x}{\bar v}$ consistent with $\train_i = \{(u_1,c_1),\dots,(x_i,c_i)\}$.

\begin{theorem}\label{thm:onlineLearning}
  Let $q,\ell\in \NN$. There is
  an indexing algorithm $A$ that, given a tree $\tree$ computes an index $A(\tree)$ and
  an online learning algorithm $B$ that, given $\tree, I(\tree)$, and an $MSO[q,\ell+1]$-realizable sequence $\train = (u_1, c_1), \dots$ for $\tree$, maintains a consistent hypothesis $H_i = \model{\tree}{\phi}{x}{\bar v}$ for every $i\in\NN$
  such that $A$ runs in time $\onot(|\tree|)$ and $B$ runs in time $\onot{}(\log^2 (|\tree|))$ per update.
\end{theorem}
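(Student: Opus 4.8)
The plan is to reuse the entire static machinery of \cref{thm:msoParameterLearning} and to replace its batch update/trace procedure by an incremental one. The indexing algorithm $A$ mirrors the indexing phase of \cref{thm:msoParameterLearning} run once with $\train=\emptyset$: it computes $\decomp(\tree)$, the dependence order $<_\hp$, the monoids $\innermonoid$ and $\outermonoid$ and the morphism $\freeToOuterMorphism$ from \cref{lem:monoidsInnerAndOuter}, and, for each heavy path $P_i$, a factorization tree over $\outermonoid$. During indexing I also precompute, for every node, a pointer to its heavy path and its leaf in the corresponding factorization tree, and for every non-root path $P_j$ a pointer to its attachment node $u'$ in the parent path together with the leaf of $F_{P_{j}\text{'s parent}}$ carrying the cut-off monoid of $P_j$. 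All of this is $\onot(|\tree|)$. For the model-learning formulation I run this in parallel for each of the constantly many semantically distinct formulas in $\mso[q,\ell+1]$, which is only a constant-factor overhead.

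The core observation for the online algorithm $B$ is that a single example $(u_i,c_i)$ induces exactly one label change in $\tree$, namely in the $\{P,N,?\}$ component at $u_i$, and by \cref{lem:logarithmicHeavyPathsToTheRoot} the root-to-$u_i$ path meets only $\onot(\log|\tree|)$ heavy paths. The change therefore propagates along a chain $P_{j_0}<_\hp P_{j_1}<_\hp\dots<_\hp P_1$ of length $\onot(\log|\tree|)$. First I change $u_i$'s leaf label in its own path's tree $F_{P_{j_0}}$; this may change the root value $\freeToOuterMorphism(\tree_3[P_{j_0}])$, which is exactly the $\outermonoid$-element of the cut-off subtree hanging at the attachment node $u'\in P_{j_1}$, so I perform a single point update at $u'$'s leaf in $F_{P_{j_1}}$, and continue up to $P_1$. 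A new hypothesis is then obtained by re-running the tracing algorithm of \cref{alg:trace} from the root of $F_{P_1}$; realizability of the whole sequence guarantees that every prefix $\train_i$ is realizable, so for some candidate formula the root value meets $F$ and the trace of \cref{lem:functionMonoidParameters} succeeds, yielding a consistent $H_i=\model{\tree}{\phi}{x}{\bar v}$.

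The main obstacle, and the reason the per-example cost is $\log^2$ rather than $\log$, is that the batch update tool of \cref{lem:updateFactorizationTree} roughly doubles the height of a factorization tree at each application. In the batch setting this is harmless because every heavy path is touched at most once, but online the same path may be hit by $\Omega(|\tree|)$ successive examples, so repeatedly invoking \cref{lem:updateFactorizationTree} would inflate the height to $2^{\Omega(|\tree|)}$ and destroy the bound. I avoid this by maintaining, instead of the constant-height Simon trees of \cref{thm:simon} used in the indexing of \cref{thm:msoParameterLearning}, \emph{balanced} binary factorization trees of height $\onot(\log|\tree|)$ over $\outermonoid$ (in the manner of a segment tree): a point update then merely recomputes the $\onot(\log|\tree|)$ products $\cdot_\outermonoid$ along the leaf-to-root path and leaves the height permanently at $\onot(\log|\tree|)$, independent of how many updates have been processed. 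Each such product is $\onot(1)$ for fixed $\phi$. This is the one genuine departure from the batch construction.

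Combining the two logarithms gives the bound: the propagation touches $\onot(\log|\tree|)$ heavy paths, each requiring an $\onot(\log|\tree|)$ point update, and the subsequent re-trace descends through $\onot(\log|\tree|)$ heavy paths spending $\onot(\log|\tree|)$ per balanced tree, for a total of $\onot(\log^2|\tree|)$ per example once $\ell$, $|\innermonoid|$, $|\outermonoid|$ and the number of candidate formulas are absorbed into $\onot$. The remaining checks are routine: a point update in $P_{j_k}$ changes at most one leaf of $F_{P_{j_{k+1}}}$ (only the cut-off monoid at the single attachment node), so the chain length controls everything; the tracing of \cref{alg:trace} is unaffected by using balanced binary rather than Simon trees, since it already descends via binary splits $m_1m_2=m$; and consistency of each $H_i$ with $\train_i$ follows verbatim from the correctness argument given after \cref{alg:trace}, now invoked on the current labelling that encodes $\train_i$.
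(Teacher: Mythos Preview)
Your proposal is correct and follows essentially the same approach as the paper: replace the Simon factorization trees by balanced binary factorization trees so that point updates do not change the tree structure and cost $\onot(\log|\tree|)$ each, and combine this with the $\onot(\log|\tree|)$ heavy paths on the root-to-example chain from \cref{lem:logarithmicHeavyPathsToTheRoot} to obtain the $\onot(\log^2|\tree|)$ per-example bound. Your write-up is in fact more explicit than the paper's sketch (you spell out the propagation chain, the attachment pointers, the re-trace, and the brute-force over $\mso[q,\ell+1]$), but the key idea is identical.
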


This can be achieved by substituting Simon factorization trees by the conceptually simpler binary factorization trees in the construction.
This implies two main differences.
First, we can update labels arbitrarily often without changing the structure of the factorization tree (which does not hold for Simon factorization trees due to the idempotent elements).
Second, the height of each factorization tree is logarithmic in its length, \ie at most logarithmic in~$\tree$. 
Therefore it takes logarithmic time to update each path and since a single update may involve updating logarithmically many paths this results in a runtime of $\onot{}(|\train_i| \log^2 (|\tree|))$ per update.

\section{Conclusion}
We considered the setting of learning quantifier-free and MSO formulas on trees.
All learning algorithms provided in this paper search for consistent hypotheses, thus they can be turned into PAC learning algorithms by providing a large enough training set. 

We assumed the background structures to be huge and therefore have been researching sublinear algorithms which access to the background structures through the local access oracles.
The first result is that even for quantifier free formulas there is no sublinear learning algorithm.
However, there is a sublinear learning algorithm when given access to the largest common ancestor of two nodes.
Our main result is a learning algorithm for unary MSO formulas which uses a linear indexing phase to build up an auxiliary structure (the index) and admits a logarithmic learning time with local access to that index.

Further research questions might include lifting the result to higher dimensions where examples consist of pairs or tuples of nodes instead of single positions in the tree.
Another direction of research could be to extend our results for tree-like structures.
For structures of bounded tree-width the approach could use a similar structure as the one from \cite{courcelle1990monadic}.
A slightly different research question would be to look for approximate solutions where only a certain (relative) amount of examples needs to be consistent.
Such approaches could also deal with faulty examples, which occur quite regularly in practice.

\bibliographystyle{plain}
\bibliography{quellen}

\end{document}